\documentclass[12pt]{article}
\usepackage{inputenc}
\usepackage{textcomp}
\usepackage{tgpagella}
\usepackage[margin=1.0in,letterpaper]{geometry}
\usepackage{setspace}
\usepackage{graphicx}
\usepackage{amsmath}
\usepackage{amssymb}
\usepackage{amsfonts}
\usepackage{amstext}
\usepackage{amsthm}
\usepackage{comment}
\usepackage[authoryear]{natbib}
\usepackage{enumerate}
\usepackage{todonotes}
\usepackage{url}
\usepackage{hyperref}
\usepackage{subfig}
\usepackage{multirow}
\usepackage{booktabs}
\usepackage[linesnumbered]{algorithm2e}
\usepackage{bm}

\theoremstyle{plain}
\newtheorem{lem}{\protect\lemmaname}
\theoremstyle{plain}
\newtheorem{assumption}{\protect\assumptionname}
\theoremstyle{plain}
\newtheorem{thm}{\protect\theoremname}
\theoremstyle{remark}

\theoremstyle{definition}
\newtheorem{definition}{\protect\definitionname}


\makeatother

\providecommand{\assumptionname}{Assumption}
\providecommand{\lemmaname}{Lemma}
\providecommand{\remarkname}{Remark}
\providecommand{\theoremname}{Theorem}
\providecommand{\definitionname}{Definition}

\global\long\def\cvp{\overset{p}{\to}}%
\global\long\def\cvw{\Rightarrow}%

\global\long\def\cvpst{\overset{p*}{\to}}%
\global\long\def\cvwst{\Rightarrow^*}%

\global\long\def\ion{\frac{1}{n}}%
%

%
%

%
%

%

%

%

%

%

%
%


\setlength{\marginparwidth}{2cm}


\usepackage{array}
\newcolumntype{L}[1]{>{\raggedright\let\newline\\\arraybackslash\hspace{0pt}}m{#1}}
\newcolumntype{C}[1]{>{\centering\let\newline\\\arraybackslash\hspace{0pt}}m{#1}}
\newcolumntype{R}[1]{>{\raggedleft\let\newline\\\arraybackslash\hspace{0pt}}m{#1}}

\begin{document}


\title{Inference for parameters identified by conditional moment restrictions using a generalized Bierens maximum statistic\thanks{This is a revised version of the ArXiv:2008.11140 paper entitled ``Powerful Inference'', which has been online since 25 August 2020. We are grateful to the editor, four anonymous referees, and Jesse Shapiro for helpful comments. This work was supported in part by the Cowles Foundation, the European Research Council (ERC-2014-CoG- 646917-ROMIA) and by the UK Economic and Social Research Council (ESRC) through research grant (ES/P008909/1) to the CeMMAP.
Part of this research was carried out when Seo was visiting Cowles Foundation in 2018/2019 and was supported by the Ministry of Education of the Republic of Korea and the National Research Foundation of Korea (NRF-2018S1A5A2A01033487) and the Seoul National University Research Grant in 2023.}}	
\author{
Xiaohong Chen\footnote{Yale University and Cowles Foundation for Research in Economics; xiaohong.chen@yale.edu} 
 \and
Sokbae Lee\footnote{Columbia University and Centre for Microdata Methods and Practice; sl3841@columbia.edu}
 \and
Myung Hwan Seo\footnote{Author Correspondance; Seoul National University and Institute of Economic Research; myunghseo@snu.ac.kr}
\and
Myunghyun Song\footnote{Columbia University
	; ms6347@columbia.edu} }
\date{}

\maketitle


\doublespacing

\begin{abstract}
Many economic panel and dynamic models, such as rational behavior and Euler equations, imply that the parameters of interest are identified by conditional moment restrictions. We introduce a novel inference method without any prior information about which conditioning instruments are weak or irrelevant. Building on Bierens (1990), we propose penalized maximum statistics and combine bootstrap inference with model selection. Our method optimizes asymptotic power by solving a data-dependent max-min problem for tuning parameter selection. Extensive Monte Carlo experiments, based on an empirical example, demonstrate the extent to which our inference procedure is superior to those available in the literature.

\medskip

\noindent \textbf{Keywords}: conditional moment restrictions, conditional instruments, hypothesis testing, penalization, multiplier bootstrap, max-min.
\end{abstract}


\section{Introduction}

Conditional moment restrictions of the following forms are ubiquitous in economics:
\begin{align}
 & \mathbb{E} \left[ g\left(X_{i},\theta\right)|W_{i} \right ] =0 \quad a.s.\quad
 \text{if and only if }\theta=\theta_{0}, \label{eq:CM model}
\end{align}
where $g(x, \theta)$ is a real-valued known function with a finite-dimensional parameter $\theta$, and $W_i\in \mathbb{R}^p$ is a $p$-dimensional vector of conditioning instruments so that the parameter of interest $\theta_0$ is strongly identified by \eqref{eq:CM model}. There is now a mature literature on optimal estimation and inference for $\theta_0$ in \eqref{eq:CM model} when the conditioning instruments $W_i$ are all relevant and of low-dimension. Most existing approaches are firstly to estimate $\theta_0$ efficiently and secondly to develop suitable test statistics based on the estimators \citep[see, e.g.,][among many others]{Chamberlain:1987,donald2003empirical,Ai:Chen:03,dominguez2004consistent,kitamura2004empirical}. However, when the dimension $p$ of conditioning instruments $W_i$ is \emph{boundedly large}\footnote{We will use expression ``boundedly large'' throughout the paper to emphasize that we focus on the setting where the dimension ($p$) of the conditioning instruments is relatively large but does not grow with sample size ($n$). We are grateful to the editor for suggesting this expression.} and when some of $W_{i}$ are weak or irrelevant, semiparametrically efficient estimators for $\theta_0$, though theoretically optimal according to asymptotic theory, could perform poorly in small samples. In this paper, we take a different path and conduct inference for $\theta_0$ in \eqref{eq:CM model} directly by skipping the first step efficient estimation of $\theta_0$. 


To construct a more informative confidence set for $\theta_0$, 
 we propose an $\ell_1$-penalized \citet{bierens1990consistent}'s maximum statistic for hypothesis testing on $\theta_0$ satisfying model \eqref{eq:CM model}.
Our penalized inference method is shown to be asymptotically valid when the null hypothesis is true and can be 
calibrated to optimize the asymptotic power against a set of $n^{-1/2}$-local alternatives of interest. 
The penalization tuning parameter is selected by solving a data-dependent max-min problem.
Specifically, we  elaborate on the choice of the penalty parameter $\lambda$ under
a limit of experiments, formally define optimal $\lambda$,
and establish consistency of our proposed calibration method.

Our new test can be viewed as a generalization of the original Bierens' max test with $\lambda=0$ to allow for $\lambda \geq 0$. 
Most of the existing theoretical literature has simply ignored the importance of the choice of $\Gamma$ in maximizing finite sample power of Bierens' max test, although \citet{bierensWang2012integrated} discuss it for the integrated conditional moment test. Intuitively, a larger $\Gamma$ will not reduce the finite-sample power of the test but will increase computational costs. Conversely, a smaller $\Gamma$ may not incur high computational costs but could have an adverse effect on the finite-sample power.
Our test is to let $\Gamma$ be a large region so that it is not a binding constraint and then to choose a scalar $\lambda\geq 0$ in a data-driven way to maximize power. 
To get the same empirical power, it is much easier to optimize over a scalar $\lambda$ than over a binding set $\Gamma \subset \mathbb{R}^p$ when $p$ is boundedly large. 
That is, our penalized test statistic is easier to compute than the one without penalization.\footnote{In convex optimization, a penalized approach is equivalent to a constrained one: that is, penalized optimization with parameter $\lambda$ corresponds to constrained optimization with a different tuning parameter \citep[see, e.g.,][]{boyd2004convex}. A smaller $\lambda$ expands the feasible parameter space in the constrained problem, increasing computational time. For non-convex problems like ours, the equivalence is not guaranteed, but we speculate a similar phenomenon 
because the $\ell_1$ penalty term is convex and a larger $\lambda$ leads the overall objective function to behave more akin to a convex function. To confirm, we analyzed computation times and present the results in Table~\ref{tab:computation-times}.
}
The computational gains by penalization are practically important since the $p$-value is constructed by  a multiplier bootstrap procedure.

We demonstrate the usefulness of our method by applying it to a couple of empirical examples. 
First, as our main example, we revisit \citet{yogo2004} and find that an uninformative confidence interval
(resulting from unconditional moment restrictions)  for the elasticity of intertemporal substitution, based on annual US series ($n \approx 100$ and $p = 4$), can turn into an informative one. 
We provide further supporting evidence via extensive Monte Carlo experiments that mimic \citet{yogo2004}.
Second, as our supplementary example, in the online appendix, we revisit the test of \citet{SSA:2004} and show that our method yields a rejection of the null hypothesis of rational unbiased reporting of ability status at the conventional level even with a small sample size $n \approx 350$ and relatively large $p=21$. Interestingly, \citet{SSA:2004} already implemented their version of Bierens' test and failed to reject the null. This is also consistent with an anonymous referee's point that the original Bierens' tests are not recommended for applications with small sample sizes. Both empirical examples suggest that there is substantive evidence for the efficacy of our proposed method.

\paragraph{Closely related literature}

The original Bierens' statistics \citep{Bierens:82,bierens1990consistent} are designed for consistent specification tests of a parametric null regression model against nonparametric regression alternatives. See, e.g., \citet{deJong,Andrews:97,ICM:1997,Stinchcombe:White,Chen:Fan,fan2000consistent,Lavergne:2008} for earlier various extensions of Bierens' consistent tests of parametric, semiparametric null against nonparametric alternatives in regression settings. 
\citet{Horowitz:2006} is perhaps the first to extend \citet{Bierens:82}'s integrated conditional moment (ICM) statistic to test a null of a parametric IV regression $E[Y-f(X,\theta_0)|W]=0$ (with $X\neq W$) against a nonparametric IV regression alternative. Our motivation is different as we are concerned about testing the null hypothesis regarding $\theta_0$ in \eqref{eq:CM model}, assuming that the underlying IV model \eqref{eq:CM model} is correctly specified while some of the conditioning instruments $W_i$ could be irrelevant. 


Recently, \citet{Antoine:Lavergne} leveraged \citet{Bierens:82}'s ICM statistic to develop an inference procedure for a finite-dimensional parameter of interest within a linear IV regression framework. The first arXiv version of our work is contemporaneous with theirs and shares the same objective of obtaining more informative confidence sets for $\theta_0$ using conditional instruments, accommodating weak IV scenarios. Importantly, our tests are based on \citet{bierens1990consistent}'s maximum statistic, whereas the tests of \citet{Antoine:Lavergne} are built on \citet{Bierens:82}'s ICM statistic. Thus, our tests are complements rather than serving as substitutes, proposing distinct statistics with differing power properties. However, our tests demonstrate superior size and power properties compared to those of \citet{Antoine:Lavergne}, particularly as the dimension of irrelevant instruments increases.\footnote{Detailed results of Monte Carlo experiments can be found in an earlier version of this paper, available at \url{https://arxiv.org/pdf/2008.11140v4}.}

The remainder of the paper is organized as follows.
In Section \ref{sec:test}, we define the test statistic and describe how to obtain bootstrap $p$-values.
Section \ref{sec:bootstrap} establishes bootstrap validity.
In Section \ref{sec:Choice-of-Penalty}, we derive consistency and local power
and  propose how to calibrate the penalization  parameter to optimize the power of the test.
In Section \ref{sec:implementation}, we summarize our proposed inference procedure and provide pseudo-code.
In Section \ref{sec:subvector}, we extend our method to inference for $\theta_{1,0}$,
 for which we use plug-in estimation of $\theta_{2,0}$, where $\theta_0 = (\theta_{1,0}, \theta_{2,0})$.
In Sections~\ref{sec:emp2} and ~\ref{sec:mc}, respectively, we present our main empirical example and Monte Carlo experiments based on \citet{yogo2004}.
Section \ref{sec:concl} gives concluding remarks.
All the proofs are in Section \ref{sec:proofs}.
Online Appendix \ref{sec:emp1} gives our supplementary empirical example using data from \citet{SSA:2004}.

\section{Test Statistic}\label{sec:test}

In this section, we introduce our test statistic and describe how to carry out bootstrap inference.

Before we present our test statistic, we first assume the following conditions.

\begin{assumption}\label{assum0}
\begin{enumerate}[(i)]
\item $\Theta$ is a compact convex nonempty subset of $\mathbb{R}^d$.
\item \label{assum1}
The time series $\left\{ X_{i},W_{i}\right\} $ is strictly stationary and ergodic, 
where $W_{i}$ is adapted to the natural filtration $\mathcal{F}_{i-1}$ up to time $ i-1 $,
 and $\{U_{i}:=g(X_i,\theta_0) \}$ is a martingale difference sequence (mds). Furthermore, $ \mathbb{E} [ |g(X_i,\theta)|^c ] $ is bounded on $\Theta$
 for some $ c> \max\{2,d\}$.
 
\item $\mathbb{E} \left[ g\left(X_i,\theta \right)|W_i \right ] =0 \; a.s.$ if and only if $\theta=\theta_{0}$, where $\theta_0 \in \Theta$.
\item The function $\mathbb{E} \left[ | g\left(X_i,\theta\right) |^2 \right] $ is bounded and bounded away from zero on $\Theta$. 
\item  $W_i$ is a bounded random vector in $\mathbb{R}^p$.
\end{enumerate}
\end{assumption}


The boundedness assumption on $W_i$ is without loss of generality since we can take a one-to-one transformation to ensure that each component of $W_i$ is bounded (for instance, $x \mapsto \tan^{-1}(x)$ componentwise, as used in \citet{bierens1990consistent}).
Condition \eqref{assum1} is standard and ensures the weak convergence of  the stochastic processes $ \sqrt{n}M_n (\gamma) $ and $ s_n^2(\gamma) $, which will be introduced later in this section.



Let $\Gamma$ denote a compact nonempty subset in $\mathbb{R}^{p}$.
Define
\begin{align}\label{M:def}
M\left(\theta,\gamma\right):= \mathbb{E} \left[ g\left(X_i,\theta\right) \exp(W_i'\gamma) \right].
\end{align}
\citet{bierens1990consistent} established the following result. 

\begin{lem}[\citet{bierens1990consistent}]\label{Lem:Bierens}
Let Assumption \ref{assum0} hold.
Then, $M\left(\theta,\gamma\right)=0$ if $\theta=\theta_{0}$
and $M\left(\theta,\gamma\right)\neq 0 \ for \ almost \ every \ \gamma \in \Gamma $ if
$\theta\neq\theta_{0}$. 
\end{lem}


To minimize notational complexity,   we often abbreviate
$M\left(\gamma\right):=M\left(\theta_{0},\gamma\right)$ as for  
$U_{i}:=g\left(X_{i},\theta_{0}\right)$ in condition \eqref{assum1} throughout this paper.
In order to test a hypothesis \[ H_0: \theta_0 =\bar{\theta} \] 
against its negation $H_1: \theta_0 \neq \bar{\theta}$,
we construct a test statistic as follows. 
First, define
\begin{align}\label{test-components}
\begin{split}
{M}_{n}(\theta,\gamma) & :=\frac{1}{n}\sum_{i=1}^{n}g(X_i,\theta)\exp(W_{i}'\gamma),\\
{s}_{n}^{2}(\theta,\gamma) & :=\frac{1}{n}\sum_{i=1}^{n}\left[g(X_i,\theta)\exp(W_{i}'\gamma)\right]^{2},\\
{Q}_{n}(\theta,\gamma) & :=\sqrt{n}\frac{|{M}_{n}(\theta,\gamma)|}{{s}_{n}(\theta,\gamma)},
\end{split}
\end{align}
where we let $Q_n = 0$ if   $s_n = 0$ for a given $\theta$ and $\gamma$. Note that $g(X_i,\theta_{0}) \exp(W_{i}'\gamma)$ is a centered random variable and the criterion $Q_n$ is motivated by the $t$-statistic and it is asymptotically folded standard Normal at $\theta = \theta_0 $ for each $\gamma$. Define the test statistic for the null hypothesis as
\begin{align}
{T}_{n}(\bar{\theta},\lambda) & :=\sup_{\gamma \in \Gamma}\left[{{Q}_{n}(\bar{\theta},\gamma)}-\lambda \|\gamma\|_{1}\right],\label{test-stat-def}
\end{align}
where $\|a\|_{1}$ is the $\ell_{1}$ norm of a vector $a$
and $\lambda \geq 0$ is the penalization parameter.\footnote{Instead of a weighted version of the $\ell_1$-norm, we focus on the simple case for brevity.}
We regard ${T}_{n}(\bar{\theta},\lambda)$ as a stochastic process indexed by $\lambda \in \Lambda$, where $\Lambda$ is a compact nonempty subset in $\mathbb{R}_{+}  := \{\lambda \in \mathbb{R} | \lambda \geq 0 \}$.

We note that our test can be viewed as a generalization of the original Bierens' max test with $\lambda=0$ to allow for $\lambda \geq 0$. 
Under our Assumption \ref{assum0}(iv), one can let $\Gamma = \prod_{j=1}^p [-a_j,a_j]$ for any $a_j > 0,~ j = 1,...,p$. 
The motivation of our test is to let $\Gamma$ be a large region so that $\Gamma$ is not a binding constraint and then to choose a scalar $\lambda\geq 0$ in a data-driven way to maximize power. It is much easier to optimize over a scalar $\lambda$ then over a set of tuning parameters $a_j > 0,~ j = 1,...,p$ when $p$ is large.

We end this section by commenting that our test statistic  ${T}_{n}(\bar{\theta},\lambda)$
is substantially different from the LASSO's criterion. 
First of all, the term ${M}_{n}(\bar{\theta},\gamma)$ is not the least squares objective function;
second, $\gamma$ is different from regression coefficients.  
Furthermore,  if we mimic LASSO more closely, the test statistic would be 
\begin{align}
{T}_{n, \text{alt}}(\bar{\theta},\lambda) & :=
\sup_{\gamma \in \Gamma}
\Big\{ n [{M}_{n}(\bar{\theta},\gamma)]^{2}  -\lambda \|\gamma\|_{1} \Big\}.\label{test-stat-def-alt}
\end{align}
We have opted to consider ${T}_{n}(\bar{\theta},\lambda)$ in the paper because
it is properly studentized and  comparable in scale to  the $\ell_1$-penalty term, unlike 
${T}_{n, \text{alt}}(\bar{\theta},\lambda)$ in \eqref{test-stat-def-alt}.  
In addition, it would be easier to specify $\Gamma$ with ${T}_{n}(\bar{\theta},\lambda)$
because the magnitude of ${M}_{n}(\bar{\theta},\gamma)$ tends to get larger as the scale of $\gamma$ increases 
but ${Q}_{n}(\bar{\theta},\gamma)$ may not. 

Finally, we conclude this section by commenting that our choice of the $\ell_1$-penalty is on an ad hoc basis. 
As is well known for LASSO, the $\ell_1$-penalty is more likely to exclude irrelevant instruments than the $\ell_2$-penalty and shrinks the unpenalized optimizer by the same amount while the $\ell_2$-penalty shrinks it proportionally. 
Also, the penalty term affects the test statistic differently under the null and the alternative hypotheses. 
Ideally, the best penalty should reduce the value of the test statistic more under the null than under the alternatives to maximize the power, while maintaining some computational gains. This issue of choosing an optimal penalty term is a very challenging topic, which deserves further independent research.

\subsection{Bootstrap Critical Values}\label{subsec:bootstrap}


We consider the multiplier bootstrap to carry out inference. Define
\begin{align}\label{mult-bts}
\begin{split}
{M}_{n,\ast}(\theta,\gamma) & :=\frac{1}{n}\sum_{i=1}^{n}\eta_{i}^{\ast}g(X_i,\theta)\exp(W_{i}'\gamma),\\
{s}_{n,\ast}^{2}(\theta,\gamma) & :=\frac{1}{n}\sum_{i=1}^{n}\left[\eta_{i}^{\ast}g(X_i,\theta)\exp(W_{i}'\gamma)\right]^{2},\\
{Q}_{n,\ast}(\theta,\gamma) & :=\sqrt{n} \frac{|{M}_{n,\ast}(\theta,\gamma)|}{{s}_{n,\ast}(\theta,\gamma)}, \\
{T}_{n,\ast}(\theta,\lambda) & :=\sup_{\gamma \in \Gamma}\left[{{Q}_{n,\ast}(\theta,\gamma)}-\lambda \|\gamma\|_{1}\right],
\end{split}
\end{align}
where $\eta_{i}^{\ast}$ is drawn from $N(0,1)$ and independent from
data $\{(X_{i},W_{i}):i=1,\ldots,n\}$.\footnote{Instead of using ${s}_{n,\ast}^{2}(\theta,\gamma)$, we may employ ${s}_{n}^{2}(\theta,\gamma)$ to define ${Q}_{n,\ast}(\theta,\gamma)$. The resulting first-order asymptotic theory would be equivalent.} For each bootstrap replication
$r$, let
\begin{align}
{T}_{n,\ast}^{(r)}(\theta,\lambda) & :=\sup_{\gamma \in \Gamma}\left[{{Q}_{n,\ast}^{(r)}(\theta,\gamma)}-\lambda \|\gamma\|_{1}\right].\label{test-stat-def-b}
\end{align}
For each $\lambda$, the bootstrap $p$-value is defined as
\begin{align*}
{p}_{\ast}(\bar{\theta}, \lambda) :=\frac{1}{R}\sum_{r=1}^{R}1\{ {T}_{n,\ast}^{(r)}(\bar{\theta},\lambda) > {T}_{n}(\bar{\theta},\lambda) \}
\end{align*}
for a large $R$. We reject the null hypothesis at the $\alpha$ level
if and only if ${p}_{\ast}(\bar{\theta},\lambda) < \alpha$.
Then, a bootstrap confidence interval for $\theta_0$ can be constructed by inverting
a pointwise test of $H_0: \theta_0 = \bar{\theta}$.

\section{Bootstrap Validity}\label{sec:bootstrap}

We now introduce some additional notation. Let
$K (\theta,\gamma_1, \gamma_2) :=\mathbb{E} \left[ g(X_i,\theta)^{2}\exp\left(W_i'\left(\gamma_{1}+\gamma_{2}\right)\right) \right]$
and $s^2\left(\theta,\gamma\right) := \mathbb{E} \left[ g(X_i,\theta)^{2}\exp\left(2W_i'\gamma\right) \right]$. 
As before, we suppress the dependence on $ \theta $ when they are  evaluated at $ \theta_{0} $.
The boundedness of $ W_i $ and $ \Gamma $ and the moment condition for $ U_i $ together imply that
\begin{align*}
\sup_{(\gamma_1, \gamma_2) \in \Gamma^2}  K (\gamma_1, \gamma_2) < \infty 
\; \text{ and } \; \inf_{\gamma\in\Gamma}s^2 \left(\gamma\right)>0.    
\end{align*}

 Let $\{ \mathcal{M}\left(\theta,\gamma\right) : \gamma \in \Gamma\}$
be a centered Gaussian process with the covariance kernel
$\mathbb{E} \left[ \mathcal{M}\left(\theta,\gamma_{1}\right)\mathcal{M}\left(\theta,\gamma_{2}\right) \right]
=K (\theta,\gamma_1, \gamma_2)$.
Also, let $ \cvw $ denote the weak convergence in the space of uniformly bounded functions on the parameter space that is  endowed with the uniform metric. Additionally, we define $L_\infty(A)$ as the space of all bounded functions defined on $A = \Gamma$ or $A = \Lambda$.

We first establish the weak convergence of ${T}_{n}(\lambda)$.

\begin{thm}\label{thm:fixed p null}
Let Assumptions \ref{assum0} hold.
Then,
\begin{align}
\sqrt{n}{M}_{n}(\gamma) & \cvw\mathcal{M}\left(\gamma\right) \ \text{ in } L_\infty(\Gamma), \label{eq:FCLT}\\
{s}_{n}(\gamma) & \cvp s\left(\gamma\right) \quad\text{uniformly in }\Gamma.\label{eq:ULLN}
\end{align}
Furthermore,
\[
{T}_{n}(\lambda) \cvw {T}(\lambda) :=\sup_{\gamma\in\Gamma}\left[\frac{\left|\mathcal{M}\left(\gamma\right)\right|}{s\left(\gamma\right)}-\lambda \|\gamma\|_{1}\right] \ \text{ in }L_\infty(\Lambda).
\]
\end{thm}

We now show that the bootstrap analog ${T}_{n,*}(\lambda)$ of ${T}_{n}(\lambda)$ converges weakly to the same limit. 
The definition of the conditional weak convergence, $ \cvwst $ in $ P $, and conditional convergence in probability, $ \cvpst $  in $ P $, employed in the following theorem is referred to e.g. Section 2.9 in \cite{VW:1996:book}.

\begin{thm}\label{thm:fixed p null:bootstrap}
Let Assumption \ref{assum0} hold.
Then, for any fixed $ \bar{\theta } $,
\begin{align}
\sqrt{n}{M}_{n,*}( \bar{\theta },\gamma) & \cvwst\mathcal{M}\left(\bar{\theta },\gamma\right)\ \text{ in } L_\infty(\Gamma) \quad \text{ in } P, \label{eq:FCLT-1}\\
{s}_{n,*}(\bar{\theta },\gamma) & \cvpst s\left(\bar{\theta },\gamma\right)  \quad \text{uniformly in }\Gamma \quad \text{ in } P. \label{eq:ULLN-1}
\end{align}
Furthermore,
\[
{T}_{n,*}(\bar{\theta },\lambda) \cvwst {T}(\bar{\theta },\lambda) :=\sup_{\gamma\in\Gamma}\left[\frac{\left|\mathcal{M}\left(\bar{\theta },\gamma\right)\right|}{s\left(\bar{\theta },\gamma\right)}-\lambda \|\gamma\|_{1}\right]\ \text{ in } L_\infty(\Lambda) \quad \text{in } P.
\]
\end{thm}

Recall that we abbreviate
${M}_{n}(\gamma) := {M}_{n}(\theta_0,\gamma)$,
${s}_{n}^{2}(\gamma):={s}_{n}^{2}(\theta_0,\gamma)$ and ${T}_{n}(\lambda):={T}_{n}(\theta_0,\lambda)$.
While Theorem \ref{thm:fixed p null} holds at $\theta = \theta_0$, Theorem \ref{thm:fixed p null:bootstrap} holds for any $ \bar{\theta } $.
Therefore, Theorems \ref{thm:fixed p null} and \ref{thm:fixed p null:bootstrap} imply that the bootstrap critical values are valid for any drifting sequence of $ \lambda_n \in \Lambda$ under some smoothness condition on the limiting distribution and thus for the calibration of $\Lambda$ described in Section~\ref{sec:lambda:calibration}. 
See the proof of Theorem \ref{thm:consistency of lambda} for more details.

\section{Consistency, Local Power and Calibration of \texorpdfstring{$\lambda$}{lambda}}\label{sec:Choice-of-Penalty}



\subsection{Consistency}

Suppose that the null hypothesis is set as $H_{0}:\theta_0 =\bar{\theta}$ for some $\bar{\theta}\neq\theta_{0}$. Then, being explicit about the null, we write
\begin{align*}
{M}_{n}(\bar{\theta},\gamma) & =\frac{1}{n}\sum_{i=1}^{n}g\left(X_{i},\bar{\theta}\right)\exp(W_{i}'\gamma)\cvp \mathbb{E} \left[ g\left(X_i,\bar{\theta}\right)\exp(W_i'\gamma) \right],\\
{s}_{n}(\bar{\theta},\gamma) & \cvp\sqrt{
\mathbb{E} \left[ g^2 \left(X_i,\bar{\theta}\right)\exp(2W_i'\gamma) \right]
}.
\end{align*}
Therefore,  for any $\lambda \in \Lambda$,
\[
n^{-1/2} T_n (\bar{\theta},\lambda) \cvp
\sup_{\gamma \in \Gamma}
\frac{\left| \mathbb{E} \left[ g\left(X_i,\bar{\theta}\right)\exp(W_i'\gamma) \right] \right|}
	{\sqrt{\mathbb{E} \left[ g^2 \left(X_i,\bar{\theta}\right)\exp(2W_i'\gamma) \right] }}.
\]
On the other hand, the bootstrap statistic is always $ O_p (1) $,
yielding the consistency of the test based on $ T_n (\bar{\theta},\lambda) $, as in the following theorem.

\begin{thm}\label{thm:consistency}
Let Assumptions \ref{assum0} hold.
%
Then, for $ \bar{\theta} \neq \theta_{0} $,
${T}_{n} (\bar{\theta}, \lambda) \cvp +\infty$ for any $\lambda \in \Lambda$.
\end{thm}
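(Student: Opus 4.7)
The plan is to exhibit a single $\gamma^{\ast}\in\Gamma$ at which $Q_{n}(\gamma^{\ast})$ blows up at rate $n$, and then bound $T_{n}(\lambda)$ below by its value at that $\gamma^{\ast}$. Specifically, I would argue as follows.

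First, under the fixed alternative $\theta=\theta_{\ast}$, I would invoke a uniform law of large numbers for $\beta$-mixing, strictly stationary sequences (Assumption \ref{assum1}(ii)) together with the boundedness of $W_{i}$ and $\Gamma$ and the $c$-th moment condition on $U_{i}$, to obtain
\[
\sup_{\gamma\in\Gamma}\bigl|M_{n}(\gamma)-M(\theta_{\ast},\gamma)\bigr|\cvp 0,\qquad \sup_{\gamma\in\Gamma}\bigl|s_{n}^{2}(\gamma)-s^{2}(\theta_{\ast},\gamma)\bigr|\cvp 0,
\]
where $s^{2}(\theta_{\ast},\gamma):=\mathbb{E}[g^{2}(X_{i},\theta_{\ast})\exp(2W_{i}'\gamma)]$. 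These are precisely the two displays already written in the text just before the theorem; the uniformity follows because both maps $\gamma\mapsto \exp(W_{i}'\gamma)$ are Lipschitz on $\Gamma$ (since $W_{i}$ is bounded and $\Gamma$ is compact), which controls bracketing entropy.

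Next, I would apply Lemma \ref{Lem:Bierens} to $\theta_{\ast}\neq\theta_{0}$: the Lebesgue measure of $\{\gamma\in\mathbb{R}^{p}:M(\theta_{\ast},\gamma)=0\}$ is zero. Since $\Gamma$ is a compact set with positive Lebesgue measure (the Bierens construction implicitly requires $\Gamma$ to have non-empty interior, otherwise the test would be vacuous) and $\gamma\mapsto M(\theta_{\ast},\gamma)$ is continuous in $\gamma$, there exists $\gamma^{\ast}\in\Gamma$ with $M(\theta_{\ast},\gamma^{\ast})\neq 0$. Moreover, $0<s^{2}(\theta_{\ast},\gamma^{\ast})<\infty$ by the boundedness of $W_{i}$ and the fact that $g(X_{i},\theta_{\ast})\exp(W_{i}'\gamma^{\ast})$ cannot be zero almost surely (otherwise its mean $M(\theta_{\ast},\gamma^{\ast})$ would vanish).

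Combining the two steps, the continuous mapping theorem gives
\[
\frac{|M_{n}(\gamma^{\ast})|}{s_{n}(\gamma^{\ast})}\;\cvp\;\frac{|M(\theta_{\ast},\gamma^{\ast})|}{s(\theta_{\ast},\gamma^{\ast})}\;=:\;\kappa\;>\;0,
\]
so $\sqrt{Q_{n}(\gamma^{\ast})}=\sqrt{n}\,|M_{n}(\gamma^{\ast})|/s_{n}(\gamma^{\ast})$ diverges to $+\infty$ at rate $\sqrt{n}$. Finally, for any $\lambda\in\Lambda$,
\[
T_{n}(\lambda)\;\geq\;\sqrt{Q_{n}(\gamma^{\ast})}-\lambda\,\|\gamma^{\ast}\|_{1}\;\geq\;\sqrt{Q_{n}(\gamma^{\ast})}-\bar{\lambda}\,\|\gamma^{\ast}\|_{1},
\]
where $\bar{\lambda}:=\sup_{\lambda\in\Lambda}\lambda<\infty$ by compactness of $\Lambda$ and $\|\gamma^{\ast}\|_{1}$ is a finite constant. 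The right-hand side tends to $+\infty$ in probability, which yields $T_{n}(\lambda)\cvp+\infty$ (in fact uniformly over $\lambda\in\Lambda$).

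The only non-routine point is the existence of the witness $\gamma^{\ast}\in\Gamma$ with $M(\theta_{\ast},\gamma^{\ast})\neq 0$: this is the content of Lemma \ref{Lem:Bierens} but implicitly requires that the user-chosen $\Gamma$ not be a Lebesgue-null subset of $\mathbb{R}^{p}$. Once that is in hand, the proof reduces to a pointwise lower bound on $T_{n}(\lambda)$ and a one-line divergence argument, with the $\ell_{1}$-penalty term $\lambda\|\gamma^{\ast}\|_{1}$ being absorbed since it is $O(1)$ while $\sqrt{Q_{n}(\gamma^{\ast})}=O_{p}(\sqrt{n})$.
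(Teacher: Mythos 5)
Your proof is correct and takes essentially the same route as the paper's: Lemma \ref{Lem:Bierens} supplies a $\gamma$ at which the population moment is nonzero, the ergodic theorem (law of large numbers) makes $\sqrt{Q_{n}(\gamma)}$ diverge at rate $\sqrt{n}$, and the bounded penalty $\lambda\|\gamma\|_{1}$ is absorbed. Your explicit remark that $\Gamma$ must not be a Lebesgue-null subset of $\mathbb{R}^{p}$ for the witness $\gamma^{\ast}$ to exist is a point the paper's one-line proof leaves implicit, but otherwise the arguments coincide.
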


\subsection{Local Power}
\label{sec:local-power}

Consider a sequence of local hypotheses of the following form: for some nonzero constant vector $B$,
\[
\theta_{n}:=\theta_{0}+B \, n^{-1/2},
\]
which leads to the following leading term after linearization: 
\begin{align}\label{local-power-seq}
U_{i,n}:=U_i +   G\left(X_{i},\theta_{0}\right)  B \, n^{-1/2},
\end{align}
where $G\left(X_{i},\theta\right):=\partial g\left(X_{i},\theta\right)/\partial\theta'$,
assuming the continuous differentiability of $g\left( \cdot \right)$ at $ \theta_{0} $.
Unless $g\left(X_{i},\theta\right)$ is linear in $\theta$, the term $G\left(X_{i},\theta_0\right)$ depends on $\theta_0$.
However, under the null hypothesis,  $G\left(X_{i},\theta_{0}\right)$ is completely specified.
For $B$, we may set it as, for example, a vector of ones times a constant.
The form of \eqref{local-power-seq} will be intimately related to our proposal regarding how to
calibrate the penalization parameter $\lambda$.

As before, write $G_{i}:=G\left(X_{i},\theta_{0}\right)$.
Under \eqref{local-power-seq}, we have
\begin{align*}
\sqrt{n}{M}_{n}(\theta_{n},\gamma) & =\frac{1}{\sqrt{n}}\sum_{i=1}^{n}U_{i}\exp(W_{i}'\gamma)+\frac{1}{n}\sum_{i=1}^{n}\exp(W_{i}'\gamma) G_i B +o_{p}\left(1\right) \\
 & \Rightarrow\mathcal{M}\left(\gamma\right)+
 \mathbb{E} \left[ \exp(W_i'\gamma)G_i B \right]\ \text{ in } L_\infty(\Gamma).
\end{align*}
Then, we can establish that
\begin{equation}
{T}_{n}(\theta_n,\lambda) \cvw \sup_{\gamma \in \Gamma}\left[\left|\frac{\mathcal{M}\left(\gamma\right)+
\mathbb{E} \left[ \exp(W_i'\gamma)G_i B \right]}{s\left(\gamma\right)}\right|-\lambda\left\Vert \gamma\right\Vert _{1}\right]\ \text{ in }L_\infty(\Lambda),\label{eq:local limit}
\end{equation}
using  arguments identical to those to prove Theorem~\ref{thm:fixed p null}.

Define the noncentrality term
\begin{align*}
\kappa\left(\gamma,B\right):=\frac{\mathbb{E} \left[ \exp(W_i'\gamma)G_i B \right]}
{s(\gamma)}=\frac{\mathbb{E} \left[ \exp(W_i'\gamma)G_i B \right]}
{\sqrt{\mathbb{E} [U_i^{2}\exp(2W_i'\gamma)]}}.
\end{align*}
Define $\mathcal{Q}(\gamma) := \frac{\mathcal{M}(\gamma)}{s(\gamma)}$ and $T(\lambda,B) := \sup_{\gamma \in \Gamma} \vert \mathcal{Q}(\gamma) +\kappa(\gamma,B) \vert -\lambda  \left\Vert \gamma\right\Vert _{1} $.  For the test to have a nontrivial power, we need that $\kappa\left(\gamma^{*}(B,\lambda),B\right)\neq0$
with a positive probability, where $\gamma^{*}(B,\lambda)$ denotes a (random)
maximizer of the stochastic process in \eqref{eq:local limit}.
Since the penalty affects $\gamma^{*}(B,\lambda)$ in different ways under the null of $ B=0 $ and alternatives of $ B \neq 0 $, its implication on power of the test is not straightforward to analyze.
The subsequent subsection proposes a method to select $ \lambda $ in a more systematic way to increase power.

We discuss some sufficient conditions, under which the presence of penalty increases the power of the test in the setting of the preceding limit experiment. Heuristically, if the unpenalized criterion is maximized at $\gamma$  near zero, then the maximum would be less affected by the introduction of the $\ell_1$ penalty on $\gamma$.  And a suitable noncentrality function $\kappa$, like e.g. a concave function with a unique maximum at zero, can force the maximizing $\gamma$  closer to zero. That is, the power gain is obtained through the penalization if the limit experiment under the null is maximized at bigger $\gamma$ than under the alternative. The following lemma is a more formal treatment of the heuristic. 

\begin{lem} \label{lem:power gain}
Suppose that there exist a unique $ \tilde{\gamma} (b) $ maximizing $ \vert \mathcal{Q}\left(\gamma\right) +\kappa(\gamma,b) \vert $ for $b=0$ or $B$. If  
\begin{equation}\label{eq:Lambda st H0H1} 
       \left\Vert \tilde{\gamma} (B) \right\Vert_1 <  \left\Vert \tilde{\gamma} (0) \right\Vert_1 \quad a.s. 
  \end{equation}
then, we can find a more powerful test with a strictly positive $\lambda$ than the test with $\lambda=0$ in the limit experiment with $B\neq0$.  
\end{lem}

The assumption that $  \mathcal{Q}\left(\gamma\right) +\kappa(\gamma,b) $ is uniquely maximized a.s. on a compact space is standard due to \citet[Lemma 2.6]{kim1990cube},  requiring only that the increments of the Gaussian process exhibit non-negligible variances. 

An example that may meet the condition (\ref{eq:Lambda st H0H1}) is the case where the noncentrality term $ \kappa(\gamma,B) $ under the alternative hypothesis induces a sparse solution. This happens when the set of instrumental variables $ W $ contains redundant elements. It is similar to the well-known fact that the presence of an irrelevant variable in the linear regression results in loss of power in the tests based on the OLS estimates. 
Specifically, suppose for simplicity that  $W_i=\left(Z_i,F_i\right)$ and $F_i$ is a pure noise that is independent of everything else. Then, the noncentrality
term can be rewritten as
\[
\kappa\left(\gamma,B\right)=
\frac{\mathbb{E} \left[ \exp(Z_i'\gamma_1) G_i B \right]}
{\sqrt{\mathbb{E} [U_i^{2}\exp(2Z_i'\gamma_1)]}}
\frac{\mathbb{E} \left[ \exp(F_i'\gamma_2) \right]}
{\sqrt{\mathbb{E} [\exp(2F_i'\gamma_2)]}}
=:\kappa_{1}\left(\gamma_{1}, B\right)\kappa_{2}\left(\gamma_{2}\right).
\]
Then, Jensen's inequality yields that $\kappa_{2}\left(\gamma_{2}\right) \leq 1$ and the equality holds if and only if $\gamma_{2}=0$. If the dimension of $F_i$ is relatively large compared to that of $Z_i$ and the magnitude of $\kappa_{2}\left(\gamma_{2}\right)$ dominates that of $\kappa_{1}\left(\gamma_{1}, B\right)$ and $\mathcal{M}(\gamma)$  then $\tilde{\gamma}(B)$  would be closer to zero than  $\tilde{\gamma}(0)$, which is a maximizer of a centered Gaussian process of equal marginal variance. 

Since analytical derivation is involved in general cases,  we provide visual representation of the preceding discussion via some Monte Carlo simulation as below. 

Specifically, we generate a random sample with $n=1000$ from a simple linear regression model with normal random variables.
Specifically, we draw independent variables uniformly distributed on $[-1,1]$,  $\varepsilon_{ji}, \ j=1,4$  and independent standard normal variables $\varepsilon_{ji}, \ j=2,3$ and generate $X_i=\varepsilon_{1i} + \varepsilon_{2i}/2$, $Y_i= X_i \theta_n + (\varepsilon_{2i} + \varepsilon_{3i})/2$ with $\theta_{n}=\theta_{0}+B \, n^{-1/2}$ with some nonzero constant $B$, and $Z_i=(W_i,F_i)$, where $ W_i= \varepsilon_{1i} - \varepsilon_{4i}$, while $F_i$ is a $(p-1)$-dimensional independent vector uniformly distributed on $[-1,1]$,  that is independent of all the others. 
Figure \ref{fig-power} plots the ``theoretical'' power functions of our proposed test, where the power curves are obtained via Monte Carlo simulations with $\Gamma = [-5,5]^p$.
There is only one endogenous regressor here and the three lines in the figure represent the power curves as a function of the penalty level $ \lambda $ for three different values of the dimension $ p $ of $ Z_i $, whose first element is strongly correlated to $ U_i $ while the others are irrelevant. Thus, $ p-1 $ represents the number of irrelevant instruments. 
The power decreases as the number of irrelevant variables increases when the penalty $ \lambda = 0 $. This is analogous to the textbook treatment of hypothesis testing with the linear regression model. 
Next, the power increases gradually up to a certain point as the penalty grows for each $p$, in line with the preceding discussion. 
In addition, the power gain from the penalization, that is, the difference between the maximum power and the power at $ \lambda = 0 $, is bigger for larger $p$.

\subsection{Calibration of \texorpdfstring{$\lambda$}{lambda}}\label{sec:lambda:calibration}

The penalty function works differently on how it shrinks the maximizer
$\tilde{\gamma}$ under the alternatives. Ideally, it should induce
sparse solutions that force zeros for the coefficients of the irrelevant
conditioning variable to maximize the power of the test.


Although it is demanding to characterize the optimal choice of $\lambda$ analytically, we
can elaborate on the choice of the penalty parameter $\lambda$ under
the limit of experiments
\[
\frac{\mathcal{M}\left(\gamma\right)+
\mathbb{E} \left[ \exp(W_i'\gamma)G_i B \right]}{s\left(\gamma\right)},
\]
for which we parametrize the size of the deviation by $B$.
Then, our test becomes
\begin{equation}
\mathcal{T}\left(\lambda,B,\alpha\right):=1\left\{ \sup_{\gamma \in \Gamma}\left|\frac{\mathcal{M}\left(\gamma\right)+
\mathbb{E} \left[ \exp(W_i'\gamma)G_i B \right]}{s\left(\gamma\right)}\right|-\lambda\left\Vert \gamma\right\Vert _{1}>c_{\alpha}(\lambda) \right\} \label{eq:calT}
\end{equation}
for a critical value $c_{\alpha}(\lambda)$,  which is  the $(1-\alpha)$ quantile of
$\sup_{\gamma\in\Gamma}\left[\frac{\left|\mathcal{M}\left(\gamma\right)\right|}{s\left(\gamma\right)}-\lambda \|\gamma\|_{1}\right]$.

Let $\mathcal{R}\left(\lambda,B,\alpha\right):=\mathbb{E} \left[ \mathcal{T}\left(\lambda,B,\alpha\right) \right]$
denote the power function of the test under the limit experiment for
given $\lambda$, $B$ and $\alpha$, where  $0 < \alpha < 1$ is a prespecified  level of the test. We propose to select $\lambda$ by solving the max-min problem:
\begin{align}\label{max-min}
\sup_{\lambda \in \Lambda} \inf_{B\in \mathcal{B}}\mathcal{R}\left(\lambda,B,\alpha\right),
\end{align}
where $\Lambda$ is a set of possible values of $\lambda$
and
$\mathcal{B}$ is a set of possible values of $B$.
In some applications, where $B$ is one-dimensional and 
$\left| \mathcal{M}\left(\gamma\right)+
\mathbb{E} \left[ \exp(W_i'\gamma)G_i B \right] \right|$
is stochastically monotone in $|B|$,
the inner minimization over $B \in \mathcal{B}$ is simple and easy to characterize.
For $\Lambda$, we can take a discrete set of possible values of $\lambda$, including 0, if suitable.
The idea behind \eqref{max-min} is as follows.
For each candidate $\lambda$, the size of the test is constrained properly because
$\mathcal{R}\left(\lambda,0,\alpha\right)\leq\alpha$. Then we look at the least-favorable local power among possible values of $B$ and choose $\lambda$ that maximizes the least-favorable local power.

To operationalize our proposal, we again rely on a multiplier bootstrap. Define
\begin{align}\label{mult-bs-B}
\begin{split}
{M}_{n,\ast,B}(\gamma) & :=\frac{1}{n}\sum_{i=1}^{n}
\left ( \eta_{i}^{\ast} U_{i} + \frac{B}{\sqrt{n}} G_i \right) \exp(W_{i}'\gamma),\\
{s}_{n,\ast,B}^{\,2}(\gamma) & :=\frac{1}{n}\sum_{i=1}^{n}\left[\left ( \eta_{i}^{\ast} U_{i} + \frac{B }{\sqrt{n}} G_i \right)\exp(W_{i}'\gamma)\right]^{2},\\
{Q}_{n,\ast,B}(\gamma) & :=\sqrt{n}\frac{|{M}_{n,\ast,B}(\gamma)|}{{s}_{n,\ast,B}(\gamma)},
\end{split}
\end{align}
where $\eta_{i}^{\ast}$ is drawn from $N(0,1)$ and independent from
data $\{(X_{i},W_{i}):i=1,\ldots,n\}$.
The quantities above are just shifted versions of \eqref{mult-bts}.\footnote{%
Notice that we use a shifted version of ${s}_{n,\ast,B}^{2}(\gamma)$
instead of ${s}_{n,\ast,0}^{2}(\gamma)$ when we define \eqref{mult-bs-B}. This is because we would like to mimic more closely the finite-sample distribution of the test statistic under the alternative.}
For each bootstrap replication
$r$, let
\begin{align}\label{test-stat-def-b-nc}
{T}_{n,\ast,B}^{(r)}(\lambda) & :=\sup_{\gamma \in \Gamma}\left[{Q}_{n,\ast,B}^{(r)}(\gamma)-\lambda \|\gamma\|_{1}\right].
\end{align}
Then the critical value $c_{\alpha}(\lambda)$ is approximated by $ c_{\alpha}^*(\lambda) $, the $(1-\alpha)$-quantile of
${T}_{n,\ast,0}(\lambda) \equiv T_{n,*}(\lambda)$. Once $c_{\alpha}^*(\lambda)$
is obtained,
$\mathcal{R}\left(\lambda,B,\alpha\right)$ is approximated by
\begin{align}\label{RP:bootstrap}
\frac{1}{R}\sum_{r=1}^{R}1 \left\{ {T}_{n,\ast,B}^{(r)}(\lambda) >c_{\alpha}^*(\lambda) \right\}.
\end{align}

Similarly, let $ \widehat{\lambda} $ denote a maximizer of $\min_{B\in \mathcal{B}} \mathcal{R}_n (\lambda,B,\alpha) $ over $\Lambda$, where
$$
\mathcal{R}_n (\lambda,B,\alpha) := {\Pr}^{*} \{ {T}_{n,*,B} (\lambda)  > c_{\alpha}^{*} (\lambda)\},
$$
and $ {\Pr}^{*} $ denotes the conditional probability of the bootstrap sample given the sample.
Recall
$ {T}(\lambda) = \sup_{\gamma\in\Gamma}\left[\frac{\left|\mathcal{M}\left(\gamma\right)\right|}{s\left(\gamma\right)}-\lambda \|\gamma\|_{1}\right] $.
Let $ F_{\lambda}  $ and $ F^{*}_{\lambda}  $ denote the distribution function of $ T(\lambda) $ and that of $ T_{n,*}(\lambda) $ conditional on the sample $ \mathcal{X}_n $, respectively.
We make the following regularity condition on $ F_{\lambda}  $.
Let $c_{\alpha}(\lambda)$ denote  the $(1-\alpha)$-quantile of ${T}(\lambda)$ and 
$ A_{\alpha} = \{(x,\lambda): |c_{\alpha}(\lambda)-x| \leq c, \lambda \in \Lambda \}$ for some positive constant $c$.

\begin{assumption}\label{assmp:lambda-cons}
The partial derivative $\partial  F_{\lambda}(x) / \partial x $ is positive and continuous on  $ A_{\alpha} $.
\end{assumption}

When  $\lambda = 0$, $T(\lambda)$ is the maximum of a centered Gaussian process. In view of the well-known anti-concentration property of the maximum of a centered Gaussian process \citep[see, e.g.,][]{chernozhukov2014anti}, it is not restrictive to assume the presence of a density (with respect to Lebesgue measure) of $T$ at $\lambda=0$. Furthermore, the extension to noncentral Gaussian process is given by \citet{chernozhukov2016empirical}. 
Thus, we may assume $T(\lambda)$ has a bounded density for all $\lambda \in \Lambda$. Thus, its Lipschitz property in $\lambda$ implies that its distribution function indexed by $\lambda$ is also Lipschitz in $\lambda$
because 
\begin{multline*}
\Pr \left\{ T\left(\lambda_{2}\right)\leq x\right\} 
= \Pr \left\{ T\left(\lambda_{1}\right)\leq x+T\left(\lambda_{1}\right)-T\left(\lambda_{2}\right)\right\} \\
\leq \Pr \left\{ T\left(\lambda_{1}\right)\leq x+c\left|\lambda_{1}-\lambda_{2}\right|\right\} 
\leq \Pr \left\{ T\left(\lambda_{1}\right)\leq x\right\} +C\left|\lambda_{1}-\lambda_{2}\right|,
\end{multline*}
where constants $c$ and $C$ depend only on the Lipschitz constant and the densities, respectively. 
Then, the continuity assumption in Assumption \ref{assmp:lambda-cons} is reasonable. 

To sum up, we formally define optimal $\lambda$ assuming $\mathcal{R}(\lambda,B,\alpha )$ is continuous on $\Lambda \times \mathcal{B}$, which is possibly set-valued, in the following way.


\begin{definition}
Let $ \Lambda_0 \subset \Lambda$ denote a set of the global solution in \eqref{max-min} so that
\begin{align*}
\min_{B\in \mathcal{B}}\mathcal{R}\left(\lambda_0,B,\alpha\right) \geq \min_{B\in \mathcal{B}}\mathcal{R}\left(\lambda,B,\alpha\right)
\ \ \text{for any $ \lambda \in \Lambda $ and $ \lambda_0 \in \Lambda_0$},
\end{align*}
where the inequality is strict for $\lambda \notin  \Lambda_0$.
\end{definition}

The optimal set $\Lambda_0$ depends on the set $\Lambda$ of possible values of $\lambda$, the set $\mathcal{B}$ of possible values of 
$B$ in  \eqref{local-power-seq},
and the level $\alpha$ of the test.

The following theorem shows that the bootstrap critical values $c_{\alpha}^{*} (\lambda)$ are uniformly consistent for $c_{\alpha}(\lambda)$.
Then, it establishes consistency of our proposed calibration method in the sense that  $  d(\widehat{\lambda}, \Lambda_0) \cvp 0 $, where $ d(x,X) := \inf\{|x-y| : y \in X\} $.

\begin{thm}\label{thm:consistency of lambda}	
	Let Assumptions \ref{assum0} and \ref{assmp:lambda-cons} hold. Then, $c_{\alpha}^{*} (\lambda) \cvp c_{\alpha} (\lambda)$ uniformly in $ \Lambda $.
	Furthermore,
	$ d(\widehat{\lambda}, \Lambda_0) \cvp 0$ and $\Pr \{ T_{n}(\widehat{\lambda})  \geq c^*_{\alpha}(\widehat{\lambda})\} \to \alpha $.
\end{thm}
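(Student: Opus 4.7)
The plan has two layers. First, I would upgrade Theorem~\ref{thm:fixed p null:bootstrap} from pointwise-in-$\lambda$ bootstrap weak convergence to uniform-in-$\lambda$ consistency of the bootstrap critical value $c_\alpha^*(\lambda)$. Second, I would run a standard argmax-consistency argument, using uniform convergence of the bootstrap power $\mathcal{R}_n(\lambda,B,\alpha)$ to $\mathcal{R}(\lambda,B,\alpha)$ and the well-separation of $\Lambda_0$ inherited from its definition.

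For the first layer, the key observation is that $\lambda \mapsto T_{n,*}(\lambda) = \max_{\gamma\in\Gamma}[\sqrt{Q_{n,*}(\gamma)} - \lambda\|\gamma\|_1]$ is Lipschitz with constant bounded by $\sup_{\gamma\in\Gamma}\|\gamma\|_1 < \infty$, as a supremum of Lipschitz-in-$\lambda$ functions; the same holds for the limit $T(\lambda)$. Theorem~\ref{thm:fixed p null:bootstrap} gives weak convergence of $(\sqrt{n} M_{n,*}(\cdot), s_{n,*}(\cdot))$ in $\ell^\infty(\Gamma)$ conditional on the sample, almost surely, and the continuous mapping theorem promotes this to weak convergence of $T_{n,*}(\cdot)$ in $\ell^\infty(\Lambda)$. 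By Assumption~\ref{assmp:lambda-cons}, $F_\lambda$ is continuous with strictly positive density on the relevant neighborhood, so $\lambda \mapsto c_\alpha(\lambda)$ is continuous on the compact set $\Lambda$. A Polya-type argument (joint continuity of the limiting cdf plus process weak convergence implies uniform-in-$(\lambda,x)$ cdf convergence), combined with the density lower bound to invert cdfs into quantiles, yields
\[
\sup_{\lambda\in\Lambda} |c_\alpha^*(\lambda) - c_\alpha(\lambda)| \cvp 0.
\]

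For the second layer, I would repeat the argument for the shifted bootstrap process $T_{n,*}(\cdot,\cdot)$ indexed by $(\lambda,B)\in\Lambda\times\mathcal{B}$. The mean shift $n^{-1/2} G_i B$ enters $M_{n,*}$ as an additive drift whose average converges uniformly to $\mathbb{E}[\exp(W_i'\gamma) G_i B]$, which is Lipschitz in $(\gamma,B)$. Lipschitz-in-$(\lambda,B)$ continuity of the induced map then delivers $T_{n,*}(\cdot,\cdot)\cvw T(\cdot,\cdot)$ in $\ell^\infty(\Lambda\times\mathcal{B})$ conditional a.s., where $T(\lambda,B)$ is the quantity inside the indicator in \eqref{eq:calT}. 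Combined with the uniform convergence of $c_\alpha^*(\lambda)$ and continuity of the cdf of $T(\lambda,B)$ at $c_\alpha(\lambda)$ (inherited from the Gaussian-plus-smooth-drift structure), this gives
\[
\sup_{(\lambda,B)\in\Lambda\times\mathcal{B}} |\mathcal{R}_n(\lambda,B,\alpha) - \mathcal{R}(\lambda,B,\alpha)| \cvp 0,
\]
and taking $\min_B$ preserves uniform convergence in $\lambda$ since it is a $1$-Lipschitz contraction on bounded functions.

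Consistency $d(\widehat{\lambda},\Lambda_0)\cvp 0$ then follows by a standard argmax argument. The population criterion $\lambda\mapsto\min_B\mathcal{R}(\lambda,B,\alpha)$ is continuous on the compact set $\Lambda$; the strict-inequality characterization in the definition of $\Lambda_0$, together with compactness, yields well-separation, namely $\sup_{d(\lambda,\Lambda_0)\geq\varepsilon}\min_B\mathcal{R}(\lambda,B,\alpha) < \max_{\lambda\in\Lambda}\min_B\mathcal{R}(\lambda,B,\alpha)$ for every $\varepsilon>0$. The uniform convergence above then forces $\widehat{\lambda}$ into every $\varepsilon$-neighborhood of $\Lambda_0$ with probability tending to one. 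I expect the first layer to be the main technical obstacle: the bookkeeping required to promote pointwise conditional-a.s. weak convergence of $T_{n,*}(\lambda)$ into uniform quantile convergence. The Lipschitz-in-$\lambda$ structure of $T_{n,*}$ and Assumption~\ref{assmp:lambda-cons} are precisely what make this step work.
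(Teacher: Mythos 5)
Your proposal is correct and takes essentially the same route as the paper: both obtain uniform-in-$\lambda$ convergence of the conditional bootstrap cdf from the functional convergence in Theorem~\ref{thm:fixed p null:bootstrap}, convert it into uniform consistency of $c_{\alpha}^{*}(\lambda)$ using the positive-density condition of Assumption~\ref{assmp:lambda-cons}, and then deduce $d(\widehat{\lambda},\Lambda_0)\cvp 0$ from uniform convergence of $\mathcal{R}_n$ to the continuous limit $\mathcal{R}$ via a standard well-separated argmax-consistency argument. The only difference is the device for the quantile step (your Polya-plus-inversion bound versus the paper's Hadamard differentiability of the quantile map), which is immaterial.
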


\section{Implementation}\label{sec:implementation}

In this section, we summarize our inference procedure and provide pseudo-code in Algorithm~\ref{alg:}, which describes how to conduct the pointwise test of $H_0: \theta = \bar{\theta}$.


\RestyleAlgo{ruled}

\SetKwComment{Comment}{/* }{ */}
\SetKwInOut{KwAddIn}{Input for bootstrap test}
\SetKwInOut{KwAddOut}{Auxiliary output}

\begin{algorithm}[htbp!]
\caption{Inference for $H_0:\theta_0 = \bar \theta$ versus $H_1: \theta_0 \ne \bar\theta$ under the conditional moment model $\mathbb E[g(X_i, \theta_0)|W_i] = 0$}
\label{alg:}

\KwIn{the data set $\{(X_i, W_i):i=1,\ldots, n\}$, the hypothesized parameter $\bar \theta$, the search space $\Gamma \subset \mathbb R^p$, the grid for penalty levels $\Lambda \subset \mathbb R_+$. 
}


\For{$k = 1,2,\ldots, p$}{

    Calculate $\bar W_k = \frac{1}{n} \sum_{i=1}^n W_{ki}$ and $s_n(W_k) = \frac{1}{n-1} \sum_{i=1}^n (W_{ki} - \bar W_k)^2$.

    Transform $W_{ki} \leftarrow \tan^{-1}\left( (W_{ki} - \bar W_k)/s_n(W_{k})\right)$.
}

Declare variables $U_i \leftarrow g(X_i, \bar \theta)$.

Declare a function $\gamma \mapsto Q_n(\bar \theta, \gamma)$ according to \eqref{test-components}.

\For{each $\lambda$ in $\Lambda$}{
Calculate $T_n(\bar \theta, \lambda) = \sup_{\gamma \in \Gamma} \left[ Q_n(\bar \theta, \gamma) - \lambda \|\gamma\|_1 \right]$.
}

\KwOut{the set of test statistics $\{T_n(\bar \theta, \lambda): \lambda \in \Lambda \}$.}

\KwAddIn{the number of bootstrap replications $R$, the set of local alternatives $\mathcal B \subset \mathbb R^d$, the significance level $\alpha$.}

\For{$r = 1,2,\ldots, R$}{

Generate $\{\eta_i^*:i=1,\ldots, n\}$ from i.i.d. $N(0, 1)$ independently of the data.

\For{each $\lambda$ in $\Lambda$}{
Declare a function $\gamma \mapsto Q_{n, *}(\bar \theta, \gamma)$ according to \eqref{mult-bts}.

Calculate $T^{(r)}_{n,*}(\bar \theta, \lambda) = \sup_{\gamma \in \Gamma} \left[ Q_{n, *}(\bar \theta, \gamma) - \lambda \|\gamma\|_1 \right]$.

\For{each $B$ in $\mathcal B$}{

Declare a function $\gamma \mapsto {Q}_{n, *, B}(\gamma)$ according to \eqref{mult-bs-B}.

Calculate ${T}^{(r)}_{n,*,B}(\lambda) = \sup_{\gamma \in \Gamma} \left[ {Q}_{n, *,B}(\gamma) - \lambda \|\gamma\|_1 \right]$.

}

}

}

\For{each $\lambda$ in $\Lambda$}{
Declare $cv^*_\alpha(\lambda) \leftarrow (1-\alpha)\text{-quantile of }\{T_{n,*}^{(r)}(\bar \theta, \lambda):r=1,\ldots, R\}$.

\For{each $B$ in $\mathcal B$}{
Compute the simulated local power as 
$\mathcal R_n(\lambda, B, \alpha) = \frac{1}{R} \sum_{r=1}^R 1\{ {{T}}_{n, *, B}^{(r)}(\lambda) > cv^*_\alpha(\lambda)\}$.
}
}

Select the optimal penalty level as $\hat \lambda(\bar \theta) \in \arg \max_{\lambda \in \Lambda} \min_{B \in \mathcal B} \mathcal R_n(\lambda, B, \alpha).$

Compute the p-value as $p_*(\bar \theta, \hat \lambda(\bar \theta)) = \frac{1}{R} \sum_{r=1}^R 1\{T_{n, *}^{(r)}(\bar \theta, \hat \lambda(\bar \theta)) > T_n(\bar \theta, \hat \lambda(\bar \theta))\}$.

\KwOut{Reject $H_0$ at the significance level $\alpha$ iff $p_*(\bar \theta, \hat \lambda(\bar \theta)) < \alpha$.}


\end{algorithm}

In addition to the usual input such as the confidence level $\alpha$ and the number of bootstrap replications $R$, we need to specify the search space $\Gamma \subset \mathbb R^p$, the grid for penalty levels $\Lambda \subset \mathbb R_+$, and the set of local alternatives $\mathcal B \subset \mathbb R^d$.
In our numerical work, we choose $\Gamma = [-a, a]^p$ with some constant $a$ (e.g., $a=1,5$).
For $\Lambda$, we recommend excluding $\lambda = 0$ if $p$ is somewhat large (e.g., $p > 5, 10$).
Regarding $\mathcal B$, it is necessary to know the structure of the inference problem in hand. In our Monte Carlo experiments as well as empirical applications, $d=1$ and we need to choose $\mathcal B$ as a subset of $\mathbb{R}$. We provide details in Section~\ref{sec:emp2}. 

We now make several remarks on computation of $T_n(\bar \theta, \lambda)$, $T^{(r)}_{n,*}(\bar \theta, \lambda)$, and 
${T}^{(r)}_{n,*,B}(\lambda)$ in Algorithm~\ref{alg:}.
For the first empirical application to \citet{yogo2004} data set in Section~\ref{sec:emp2}, 
we use both the grid search (GS) and the \texttt{particleswarm} particle swarm optimization (PSO) solver available in \texttt{Matlab}.\footnote{Specifically, the \texttt{particleswarm} solver is included in  the global optimization toolbox software of \texttt{Matlab}.} We find our tests computed using GS and PSO on the set $\Gamma =[-a,a]^{4}$ perform similarly.
For the second empirical application in Section~\ref{sec:emp1}, we only use the \texttt{particleswarm} for optimization over $\Gamma=[-a,a]^{21}$. PSO is a stochastic population-based optimization method proposed by \citet{kennedy1995particle}.
It conducts gradient-free  global searches  and has been successfully used in economics \citep[for example, see][]{Qu:Tkachenko:2016}.\footnote{It is possible to adopt
the two-step approach used in \citet{Qu:Tkachenko:2016}. That is, we start with the PSO solver, followed by multiple local searches. Further, the genetic algorithm (GA) can be used in the first step instead of PSO and both GA and PSO methods can be compared to check whether a global solution is obtained. We do not pursue these refinements to save the computational times of bootstrap inference.}
Hence, PSO can be viewed as a more refined approach to global optimization than simple grid search.
For the Monte Carlo experiments that mimic \citet{yogo2004} data set in Section~\ref{sec:mc}, we only apply GS on the set $\Gamma =[-a,a]^{p},~p=4,6$. This is 
because PSO is based on a heuristic procedure and requires careful monitoring to check whether it produces reasonable solutions. Furthermore, it is easy to vectorize using GS but harder using the \texttt{particleswarm} solver. In short, it was too costly to monitor the \texttt{particleswarm} solver in the Monte Carlo experiments; however, it was possible with empirical applications because we did not have to regenerate data.

We end this section by recalling that the confidence interval for $\theta_0$ can be constructed by inverting a pointwise test of $H_0: \theta_0 = \bar{\theta}$.
For this purpose, one could generate $R$ collections of $\{\eta_i^*:i=1,\ldots, n\}$ and use the same collections across different values of $\bar{\theta}$ to reduce the random noise in bootstrap inference.

\section{Inference with Pre-Estimated Parameters}\label{sec:subvector}

Partition $\theta=\left(\theta_{1}',\theta_{2}'\right)'$
and
$\theta_0=\left(\theta_{1,0}',\theta_{2,0}'\right)'$.
We now consider
inference for $\theta_{1,0}$. We assume that
for each $\theta_1$,
there exists a preliminary
estimator $\widehat{\theta}_{2} (\theta_1) = \psi_{n}\left( \{X_{i},W_{i} \}_{i=1}^n \right)$ of $\theta_{2}(\theta_{1})$,
so that $\theta_{2,0} = \theta_{2}(\theta_{1,0})$.
 For example, suppose that
$g \left(X_i,\theta \right)$ can be written as
$g \left(X_i,\theta \right) = g_1 \left(X_i,\theta_1 \right) - \theta_2$.
Then, $\theta_{2,0} = \mathbb{E} \left[ g_1 \left(X_i,\theta_{1,0} \right) \right]$, thereby yielding the following estimator of $\theta_2$ given $\theta_1$:
\begin{equation}
\widehat{\theta}_{2} (\theta_1) = n^{-1} \sum_{i=1}^n g_1 \left(X_i,\theta_1 \right). \label{eq:theta2hat}
\end{equation}

In what follows, we assume standard regularity conditions on $\widehat{\theta}_{2} (\theta_1)$.
Let $\Theta_1$ denote the parameter space for $\theta_1$, which is a compact set with a non-empty interior. 
Let $\| a \|$ denote the Euclidean norm of a vector $a$.

\begin{assumption}
\label{ass:expansion}Suppose that
there exists a $\sqrt{n}$-consistent
estimator $\widehat{\theta}_{2} (\theta_1)$ of $\theta_{2}(\theta_1)$ that has the following
representation: uniformly in  $\theta_1 \in \Theta_1$, which is compact and has a non-empty interior,
\[
\sqrt{n}\left(\widehat{\theta}_{2}(\theta_1) -\theta_{2}(\theta_1) \right)=
\frac{1}{\sqrt{n}}\sum_{i=1}^{n}\zeta_{ni}(\theta_1) +o_{p}\left(1\right),
\]
where $\left\{ \psi_{ni}(\theta_1) = (\zeta_{ni}(\theta_1) , U_i) \right\} $ is a strictly stationary ergodic mds array and $$V_{\psi (\theta_1)}:=\lim_{n\to\infty}
\mathbb{E} [ \psi_{ni}(\theta_1)\psi_{ni}(\theta_1)' ]$$ is positive definite.
Furthermore, assume that there exists $G(x, \theta)$ such that
$ \mathbb{E} \|G(X_i,\theta_{0})\| <\infty $, $\theta \mapsto G(X_i,\theta)$ is  continuous at $\theta_{0}$ almost surely, and
\begin{equation}
g\left(X_{i},\theta\right)-g\left(X_{i},\theta_{0}\right)-G\left(X_{i},\theta_0\right)'\left(\theta-\theta_{0}\right)=o_{p}\left(\left\Vert \theta-\theta_{0}\right\Vert \right), \label{eq:expansion of g}
\end{equation}
and that $\theta_1 \mapsto \theta_{2}(\theta_1)$ is continuously differentiable.
\end{assumption}

Define $\widehat{U}_{i} (\theta_{1}):=g [X_{i}, \{\theta_{1},\widehat{\theta}_{2}(\theta_1) \} ]$,
$\widehat{U}_{i}:=\widehat{U}_{i} (\theta_{1,0})$. We introduce the following demeaned statistics\footnote{The test statistic $\widehat{Q}_{n}(\theta_{1},\gamma)$ is effectively based on an orthogonalized residual after estimating the unconditional mean.}
\begin{align*}
\widehat{M}_{n}(\theta_{1},\gamma) & :=\frac{1}{n}\sum_{i=1}^{n}\widehat{U}_{i}(\theta_{1})\left(\exp(W_{i}'\gamma) - \frac{1}{n}\sum_{j=1}^n \exp(W_{j}'\gamma)\right),\\
\widehat{s}_{n}^{2}(\theta_{1},\gamma) & :=\frac{1}{n}\sum_{i=1}^{n}\left[\widehat{U}_{i}(\theta_{1})\left(\exp(W_{i}'\gamma) - \frac{1}{n}\sum_{j=1}^n \exp(W_{j}'\gamma)\right)\right]^{2},\\
\widehat{Q}_{n}(\theta_{1},\gamma) & :=\sqrt{n}\frac{|\widehat{M}_{n}(\theta_{1},\gamma)|}{\widehat{s}_{n}(\theta_{1},\gamma)},
\end{align*}
and the test statistic
\begin{align}
\widehat{T}_{n}(\theta_{1},\lambda) & :=\sup_{\gamma \in \Gamma}\left[{\widehat{Q}_{n}(\theta_{1},\gamma)}-\lambda \|\gamma\|_{1}\right].\label{test-stat-def-1}
\end{align}
Partition $G\left(X_i, \theta\right)=\left[ G_{1}\left(X_i, \theta\right)',G_{2}\left(X_i, \theta\right)' \right]'$,
corresponding to the partial derivatives with respect to $\theta_{1}$ and $\theta_{2}$.
Suppressing the dependence on $ \theta_{1}$ in the notation when $\widehat{T}_n$ or $\zeta_{ni}$ is evaluated at $ \theta_{1,0}$, we obtain the following result.

\begin{thm}
\label{thm:subvector}
Let Assumptions \ref{assum0} and \ref{ass:expansion} hold. Then, 
\[
\widehat{T}_{n}(\lambda)\cvw\sup_{\gamma \in \Gamma}
\left[
\frac{\left|\bar{\mathcal{M}}\left(\gamma\right)+Z' \mathrm{cov} \left[ G_{2}\left(X_i,\theta_{0}\right),\exp\left(W_i'\gamma\right) \right] \right|}{s\left(\gamma\right)}-\lambda\left\Vert \gamma\right\Vert _{1}
\right] \ \text{ in } L_\infty(\Lambda),
\]
where $(Z, \bar{\mathcal{M}} (\gamma))$ is a centered Gaussian random vector with
$\mathbb{E} \left[ ZZ' \right]=\lim_{n\to\infty}
\mathbb{E} \left[ \zeta_{ni} \zeta_{ni}'\right]$ 
and $\mathbb{E} \left[ Z\bar{\mathcal{M}}\left(\gamma\right) \right] =\lim_{n\to\infty}
\mathbb{E} \left[ U_i \zeta_{ni}(\exp\left(W_i'\gamma\right)-\mathbb{E}\exp\left(W_i'\gamma\right)) \right]$ for each $\gamma$ and $\bar{\mathcal{M}} (\gamma)$ is a centered Gaussian process such that $\mathbb{E}\bar{\mathcal{M}} (\gamma_1)\bar{\mathcal{M}} (\gamma_2) = \mathbb{E}U_i^2 (\exp\left(W_i'\gamma_1\right)-\mathbb{E}\exp\left(W_i'\gamma_1\right))(\exp\left(W_i'\gamma_2\right)-\mathbb{E}\exp\left(W_i'\gamma_2\right))$.
\end{thm}

In the presence of pre-estimates in the test statistic, the
multiplier bootstrap in \eqref{mult-bts} is not valid.
To develop valid inference, we now describe how to modify the multiplier bootstrap by
exploiting the influence function $\zeta_{ni}$.
To ease notation, define
\begin{align*}
\widehat{G}_{2i} &:= G_{2}\left(X_{i},\left(\theta_{1,0},\widehat{\theta}_{2}\right)\right)\\
\overline{W}_{ni}(\gamma) &:= \exp(W_{i}'\gamma) - \frac{1}{n}\sum_{j=1}^n \exp(W_{j}'\gamma)
\end{align*}
Let 
\begin{align}\label{modified:1}
\begin{split}
\widehat{M}_{n,*}(\gamma) & :=\frac{1}{n}\sum_{i=1}^{n}\eta_{i}^{*}
\left\{
\widehat{U}_{i}  \overline{W}_{ni} (\gamma)
+\widehat{\zeta}_{ni}^{\, '} \ion\sum_{j=1}^{n} \widehat{G}_{2j} \overline{W}_{nj}(\gamma)
\right\},  \\
\widehat{s}_{n,*}^{2}(\gamma) & :=\frac{1}{n}\sum_{i=1}^{n}\left[
\eta_{i}^{*}
\left\{
\widehat{U}_{i}\overline{W}_{ni}(\gamma) 
+\widehat{\zeta}_{ni}^{\, '} \ion\sum_{j=1}^{n} \widehat{G}_{2j} \overline{W}_{nj}(\gamma)
\right\}\right]^{2},
\end{split}
\end{align}
where $\widehat{\zeta}_{ni}$ denotes a consistent estimator of $\zeta_{ni}$.
Then, we proceed with these modified quantities, as in Section \ref{sec:bootstrap}.
That is, to implement the bootstrap, we need to obtain an explicit plug-in formula for the influence function $\zeta_{ni}(\theta_1)$ in Assumption \ref{ass:expansion}, similar to $g_1(X_i,\theta_1) - \frac{1}{n} \sum_{i=1}^n g_1(X_i, \theta_1)$ from \eqref{eq:theta2hat}.\footnote{Instead, 
one could implement alternative bootstrap without using the explicit formula of the influence function as in \cite{Chen:Linton:VanKeilegom}. We opted not to consider this in this paper.}

\subsection{Choice of Penalty}\label{sec:choice-penalty-pre-estimated}

We start with a sequence of local alternatives $\theta_{1n}=\theta_{1,0}+B/\sqrt{n}$.
Then, expressing the hypothesized value of $\theta_{1n}$ explicitly,
we write the corresponding statistics by
\begin{align*}
\widehat{M}_{n}(\theta_{1n},\gamma) & :=\frac{1}{n}\sum_{i=1}^{n}\widehat{U}_{i}\left(\theta_{1n}\right)\overline{W}_{ni}(\gamma),\\
\widehat{s}_{n}^{2}(\theta_{1n},\gamma) & :=\frac{1}{n}\sum_{i=1}^{n}\left[\widehat{U}_{i}\left(\theta_{1n}\right)\overline{W}_{ni}(\gamma)\right]^{2},\\
\widehat{Q}_{n}(\theta_{1n},\gamma) & :=\sqrt{n}\frac{|\widehat{M}_{n}(\theta_{1n},\gamma)|}{\widehat{s}_{n}(\theta_{1n},\gamma)},
\end{align*}
and the test statistic
\begin{align}
\widehat{T}_{n}\left(\theta_{1n}\right) & :=\sup_{\gamma \in \Gamma}\left[{\widehat{Q}_{n}(\theta_{1n},\gamma)}-\lambda_{n}\|\gamma\|_{1}\right].\label{test-stat-def-1-2}
\end{align}
Define $g_i(\theta_1, \theta_2) := g[X_i, \{\theta_1, \theta_2\}]$ and partition $G(X_i, \{\theta_1,\theta_2\})  = [G_{1i}(\theta_1, \theta_2)', G_{2i}(\theta_1, \theta_2)']'$ as before.
The limit of the test statistic $\widehat{T}_{n}\left(\theta_{1n}\right)$
can be easily obtained by modifying the proof of Theorem \ref{thm:subvector}.
Specifically, we note that under the additional assumption that $\zeta_{ni}(\theta_1)$ is differentiable with respect to $\theta_1$,
\begin{align*}
	\sqrt{n}\widehat{M}_{n}(\theta_{1n},\gamma) & =\frac{1}{\sqrt{n}}\sum_{i=1}^{n}g_{i}\left(\theta_{1n},\theta_{2}(\theta_{1n})\right)\overline{W}_{ni}(\gamma)\\
	& \quad+\frac{1}{\sqrt{n}}\sum_{i=1}^{n}\zeta_{ni}(\theta_{1n})'\frac{1}{n}\sum_{j=1}^{n}G_{2j}\left(\theta_{1n},\theta_{2}(\theta_{1n})\right)\overline{W}_{nj}(\gamma)+o_{p}\left(1\right)\\
	& =\frac{1}{\sqrt{n}}\sum_{i=1}^{n}g_{i}\left(\theta_{0}\right)\overline{W}_{ni}(\gamma) 
 + \frac{1}{\sqrt{n}}\sum_{i=1}^{n}\zeta_{ni}'\frac{1}{n}\sum_{j=1}^{n}G_{2j}\left(\theta_{0}\right)\overline{W}_{nj}(\gamma) \\
&+B'\frac{1}{n}\sum_{i=1}^{n}\left[ G_{1i}\left(\theta_{0}\right)+ 
\frac{\partial\theta_{2}\left(\theta_{1,0}\right)} {\partial\theta_{1}} '
       G_{2i}\left(\theta_{0}\right)\right] \overline{W}_{ni}(\gamma)
	+o_{p}\left(1\right),
\end{align*}
using the fact that 
$n^{-1} \sum_{i=1}^n \frac{\partial\zeta_{ni}\left(\theta_{1,0}\right)} {\partial\theta_{1}} = o_p(1)$.
Thus, the noncentrality term is determined by the probability limit of $B' \mathbb{E} \left[ \omega_{i}(\gamma) \right]$, where 
\begin{align*}
\omega_{i}(\gamma) := \left[ G_{1i} \left(\theta_{0}\right) +  \frac{\partial\theta_{2}\left(\theta_{1,0}\right)} {\partial\theta_{1}}' G_{2i}\left(\theta_{0}\right)\right] \overline{W}_{ni}(\gamma).
\end{align*}
As shorthand notation, let $G_{1i} := G_{1}\left(X_i,\theta_{0}\right)
$ and $G_{2i} :=G_{2}\left(X_i,\theta_{0}\right)$.
We now adjust \eqref{eq:calT} in Section \ref{sec:lambda:calibration}
as follows: let
\begin{align}\label{eq:calT-1}
\begin{split}
&\mathcal{T}\left(\lambda,B\right) \\
&=1\left\{ \sup_{\gamma \in \Gamma}
\left|\frac{\bar{\mathcal{M}}\left(\gamma\right)+Z' \mathrm{cov} \left[ G_{2i},
\exp\left(W_i'\gamma\right) \right] +B' \mathbb{E} [ \omega_i (\gamma) ]}{s\left(\gamma\right)}\right|-\lambda\left\Vert \gamma\right\Vert _{1}>c_{\alpha} (\lambda) \right\}
\end{split}
\end{align}
for a critical value $c_{\alpha}(\lambda)$ and
$\mathcal{R}\left(\lambda,B\right)=\mathbb{E} [ \mathcal{T}\left(\lambda,B\right) ]$.
Then, as before, choose $\lambda$ by solving \eqref{max-min}.
To implement this procedure, we modify the steps in  Section \ref{sec:lambda:calibration}
with 
\begin{align*}
\widehat{M}_{n,*,B}(\gamma) & :=\frac{1}{n}\sum_{i=1}^{n} \left[ \eta_{i}^{*}
\left\{
\widehat{U}_{i}\overline{W}_{ni}(\gamma)
+\widehat{\zeta}_{ni}^{\,'} \ion\sum_{j=1}^{n} \widehat{G}_{2j} \overline{W}_{nj}(\gamma)
\right\} + \frac{B}{\sqrt{n}}\widehat{\omega}_i(\gamma) \right],\\
\widehat{s}_{n,*,B}^{2}(\gamma) & :=\frac{1}{n}\sum_{i=1}^{n}\left[
\eta_{i}^{*}
\left\{
\widehat{U}_{i}\overline{W}_{ni}(\gamma)
+\widehat{\zeta}_{ni}^{\,'} \ion\sum_{j=1}^{n} \widehat{G}_{2j} \overline{W}_{nj}(\gamma)
\right\} + 
\frac{B}{\sqrt{n}} \widehat{\omega}_i(\gamma) 
\right]^{2},
\end{align*}
where
\begin{align*}
\widehat{\omega}_i(\gamma) := \left( G_{1i} + 
 \frac{\partial\widehat\theta_{2}\left(\theta_{1,0}\right)} {\partial\theta_{1}}'
 \widehat{G}_{2i} \right) \overline{W}_{ni}(\gamma).
\end{align*}
Then, the remaining steps are identical to those in Section \ref{sec:lambda:calibration}.

\section{Inferring the Elasticity of Intertemporal Substitution}\label{sec:emp2}

In this section, we revisit \citet{yogo2004} and conduct inference on the elasticity of intertemporal substitution (EIS).
We investigate the case of the annual US series (1891--1995) used in \citet{yogo2004}, focusing on $U_t (\theta) = \Delta c_{t}  - \theta_2 - \theta_1 r_t$, where $\Delta c_{t}$ represents the consumption growth in year $t$ and $r_t$ denotes the real interest rate.
The parameter of interest is EIS, denoted by $\theta_1$.
The instruments $W_t$ consist of the two-period lags of the nominal interest rate, inflation, consumption growth, and the log dividend-price ratio. 
Before applying our method, we studentized each instrument and then applied the transformation $\tan^{-1}(\cdot)$.
The transformed instruments are denoted by $\widetilde{W}_t$.
This ensures that each component of $\widetilde{W}_t$ is bounded and comparable in scale.
The data consist of $\{ (\Delta c_{t}, r_t, W_t): t=1,\ldots,n \}$, where the time span is $n = 105$.

To perform inference on $\theta_1$ in the presence of $\theta_2$, we use the demeaned version of the generalized residuals, defined as follows:
\begin{align*}
\widehat{U}_t (\theta_1) = \left( \Delta c_{t} - \frac{1}{n} \sum_{t=1}^n \Delta c_{t}\right)  - \theta_1 \left( r_t - \frac{1}{n}  \sum_{t=1}^n r_t \right).
\end{align*}
Following the notation in Section~\ref{sec:subvector}, we have in this example,
\begin{align*}
\theta_{2}(\theta_1) &= \mathbb{E} \left[ \Delta c_{t} \right]   - \theta_{1} \mathbb{E} \left[ r_t \right], \\
\widehat\theta_{2}(\theta_1) &= \frac{1}{n} \sum_{t=1}^n \Delta c_{t}   - \theta_{1} \frac{1}{n}  \sum_{t=1}^n r_t, \\
\eta_{nt}(\theta_1) &= \left( \Delta c_{t} -  \mathbb{E}[\Delta c_{t}] \right)  - \theta_1 \left( r_t - \mathbb{E}[ r_t ] \right), \\
\widehat\eta_{nt}(\theta_1) &= \widehat{U}_t (\theta_1).
\end{align*}
Then, because $G_{2t} =  -1$, adopting \eqref{modified:1}  yields
 the following multiplier bootstrap:
 \begin{align}\label{mult-bts-yogo}
\begin{split}
\widehat{M}_{n,\ast}(\gamma) & =\frac{1}{n}\sum_{t=1}^{n}\eta_{t}^{\ast}\widehat{U}_{t}(\theta_1)\left\{ \exp(\widetilde{W}_t'\gamma) - \frac{1}{n} \sum_{t=1}^n  \exp(\widetilde{W}_t'\gamma) \right\},\\
\widehat{s}_{n,\ast}^{2}(\gamma) & =\frac{1}{n}\sum_{t=1}^{n}\left[\eta_{t}^{\ast} \widehat{U}_{t}(\theta_1)\left\{ \exp(\widetilde{W}_t'\gamma) - \frac{1}{n} \sum_{t=1}^n  \exp(\widetilde{W}_t'\gamma) \right\}\right]^{2}.
\end{split}
\end{align}
Furthermore, since $G_{1t} = -r_t$, we use the following to calibrate the optimal $\lambda$:
\begin{align*}
\widehat{M}_{n,*,B}(\gamma) & =\frac{1}{n}\sum_{t=1}^{n}
 \left\{ \eta_t^{\ast} \widehat U_t(\theta_1) - \frac{B }{\sqrt n}(r_t - \bar r)\right\} \left\{ \exp(\widetilde{W}_t' \gamma) - \frac 1 n \sum_{t = 1}^ n \exp(\widetilde{W}_t' \gamma)\right\}
,\\
\widehat{s}_{n,*,B}^{2}(\gamma) & =\frac{1}{n}\sum_{t=1}^{n}
\left[ \left\{ \eta_t^{\ast} \widehat U_t(\theta_1) - \frac{B }{\sqrt n}(r_t - \bar r)\right\}\left\{ \exp(\widetilde{W}_t' \gamma) - \frac 1 n \sum_{t = 1}^ n \exp(\widetilde{W}_t' \gamma)\right\} \right]^2,
\end{align*}
where $\overline{r} = n^{-1} \sum_{t=1}^n r_t$.



\paragraph{Computation of the Test Statistic}

In Table~\ref{tab:true-max}, we report the values of the max statistic
$$
T(\lambda, a) = \sup_{\gamma \in [-a,a]^4} \left[ \widehat Q_n(\bar{\theta}_{1},\gamma) - \lambda \|\gamma\|_1\right],
$$
evaluated at $\bar{\theta}_{1} = -0.028$, while varying the penalty level $\lambda \in \{ .3, .2, .1, 0 \}$ and the domain constant $a \in \{ 1,2,3,4,5 \}$.
This value of $\bar{\theta}_{1}$ is chosen as the 2SLS estimate from the data.
To ensure the accuracy of the algorithm, the max statistic is computed with a swarm size of 5000, which spans the search space $[-a, a]^4$ uniformly at random.

\paragraph{Computation Time}

The swarm size of 5000, used to compute $T(\lambda,a)$ across $(\lambda,a)$-values in Table~\ref{tab:true-max}, enables highly accurate optimization. However, it may be inappropriate for practical applications.
Therefore, we examine the impact of $\lambda$ and $a$ on computation time based on a more realistic swarm size of 200.
In light of the inherent randomness in swarm generation, we regenerate the swarm randomly and recompute the max statistic until it achieves at least 95\% of the maximum values reported in Table~\ref{tab:true-max}.
That is, we measure the computation times under the constraint that the optimization error does not exceed 5\%.
In Table~\ref{tab:computation-times}, we report the means and standard deviations of the computation times across various $(\lambda, a)$-values.

Tables~\ref{tab:true-max} and~\ref{tab:computation-times} reveal three important observations.
First, the average computation time tends to increase as $\lambda$ decreases across all $a$ values.
This aligns with the intuition that introducing the penalty shrinks the feasible space for $\gamma$, thereby leading to shorter search times for the maximum.
Second, the computation times are numerically most stable when $\lambda = 0.3$ is applied.
The impact of $a$ on numerical instability tends to increase with $a$ and is greatest when $\lambda$ is set to $0$.
Lastly, the value of the unpenalized statistic is more sensitive to the choice of $a$ than it is with penalization.
This suggests the importance of implicit search space selection in Bierens' max test with $\lambda = 0$, as it may significantly impact the outcomes of the analysis.
Our method is motivated by tuning a scalar parameter $\lambda$ rather than $\Gamma = [-a,a]^4$, which makes it suitable to optimize in a data-driven manner.

\paragraph{Calibration of Optimal Penalty}

The optimal $\lambda$ in $\Lambda = \{0.5, 0.4, 0.3, 0.2, 0.1, 0.05, 0\}$ is calibrated as in \eqref{max-min}, based on a single local alternative $\mathcal{B} = \{ 2 \}$.
The local alternative can be set to a singleton since $\eta_{t}^{\ast} \sim N(0,1)$ is symmetrically distributed about zero, leading to an increase in power with the absolute value of $B$.
We set $\alpha$ to $0.1$ and use the grid $\Theta_1 = \{-0.6, -0.4, -0.2, \ldots, 0.6 \}$ for hypothesized values of $\theta_{1,0}$.
For each $\lambda \in \Lambda$ and $\theta_{1,0} \in \Theta_1$, the local power in \eqref{RP:bootstrap} is simulated following the steps outlined in Section~\ref{sec:choice-penalty-pre-estimated}.

We use two algorithms for computing the max statistic to assess the sensitivity to the choice of the algorithm.
The PSO algorithm searches for the global maximum in the entire space $\Gamma = [-5, 5]^4$. 
In each computation, a swarm with a size of 200 is generated uniformly at random over $\Gamma$.
The grid search (GS) method calculates the maximum over a discretized grid $\Gamma' \subset [-5,5]^4$, chosen to minimize potential power loss while leveraging a vectorized algorithm to avoid loops used in the PSO algorithm and other optimization methods.
The tuning parameter in the grid search corresponds to the choice of the grid $\Gamma'$.
We opted for the equi-spaced grid $\Gamma' = \{ 0.5 j : -10\le j \le 10 \}^4 = \{ -5,-4.5,\ldots, 5 \}^4$, which consists of $21^4 = 194481$ points. The numbers of bootstrap replications are $R_{\mathrm{PSO}} = 1000$ and $R_{\mathrm{GS}} = 5000$ for the PSO algorithm and the GS method, respectively.

Figure~\ref{fig:local-powers-average-and-pointwise} shows the local powers simulated using these algorithms.
Panel~A displays the average local powers over $\Theta_1$ across different values of $\lambda$, while Panel~B exhibits the local powers at a specific $\theta_{1,0} = 0$.
From Panel~A, we note that the PSO algorithm yields overall higher powers than the GS method, which operates on the discrete $\Gamma'$.
In Panel~B, for both algorithms, we observe a rise in the local power within a small range around $\lambda = 0$, followed by leveling off at approximately $\lambda = 0.2$ or $\lambda = 0.3$.
Since this pattern also prevails for the other values in $\Theta_1$, we choose the optimal $\lambda$ as the maximizer of the average local powers in Panel~A.
The selected $\lambda$ are 0.3 for the PSO algorithm and 0.2 for the GS method, respectively.
These values are used for the optimally penalized test of $H_0:\theta_1 = \theta_{1,0}$ uniformly across the hypothesized values.

Table~\ref{tab:confidence-intervals} presents the confidence intervals based on our optimally penalized and unpenalized tests using each algorithm.
These confidence intervals are constructed by inverting the testing of $H_0: \theta_{1} = \theta_{1,0}$, where the hypothesized values of $\theta_{1,0}$ range over $\{ 0.01 j :-100\le j\le 100 \}$. 
It shows that the optimally penalized test yields significantly narrower confidence intervals than the unpenalized test, irrespective of the chosen algorithm.

\citet{yogo2004} commented that ``there appears to be identification failure for the annual U.S. series.''
Indeed, the 95\% confidence interval from the Anderson-Rubin (AR) test was $[-0.49,0.46]$
and those from the Lagrange multiplier (LM) test and the conditional likelihood ratio (LR) tests were $[-\infty,\infty]$ \citep[see Table 3 of][]{yogo2004}.\footnote{If each instrument is used separately for the AR test, the resulting 95\% confidence intervals are as follows: 
(i) $[-\infty, \infty]$ with the nominal interest rate as an instrument;
(ii) $[-0.29, 0.28]$ with inflation;
(iii) $[-\infty, \infty]$ with consumption growth;
(iv) $[-\infty, -0.12] \cup [0.02, \infty]$ with log dividend-price ratio,
where the instruments are twice lagged in all cases.
The confidence interval using inflation is similar to ours, but its length is more than 25\% larger than our optimal confidence interval [$-0.30, 0.15$] using the PSO algorithm and more than 15\% larger than the optimal confidence interval using the GS method.}
Our penalized test provides tighter confidence intervals than any of these similar tests based on unconditional moment restrictions, suggesting that conditional moment restrictions can be more informative than an arbitrarily selected set of unconditional moment restrictions.

In a nutshell, we demonstrate that a seemingly uninformative set of instruments can provide an informative inference result if one strengthens unconditional moment restrictions by making them infinite-dimensional conditional moment restrictions with the aid of penalization.


\section{Monte Carlo Experiments}
\label{sec:mc}
To examine the efficacy of our method in an empirically relevant context, we conduct a series of experiments based on the annual U.S. series (1891-1995) used in \citet{yogo2004}.
The main purpose of the experiments is to assess the finite-sample size and power properties of our proposed method in comparison to those of existing methods.

The simulated series is denoted by $\{ (\Delta c_t^*, r_t^*, W_t^*) \}_{t=1}^n$ with a $*$-superscript, while the original series is denoted by $\{ (\Delta c_t, r_t, W_t) \}_{t=1}^n$, as in Section~\ref{sec:emp2}.
We consider the following data-generating process in our experiments:
\begin{align}
\label{eq:monte-carlo-DGP}
    r_{t}^* &= (
        1 \ \ \ \ W_{t}^*{}'
) \pi_0 + \bar{\pi} f(W_t^*)  +   v_t^*, \nonumber\\     
    \Delta c_t^* &= (
        1 \ \ \ \ r_{t}^*
    )\theta_0 + u_t^*,\\
    \left( \begin{matrix}
    u_t^* \\ v_t^*    
\end{matrix} \right) & \sim N \left[ \left( \begin{matrix}
    0 \\ 0    
\end{matrix} \right), \left( \begin{matrix}
    \sigma_u^2 & \sigma_u \sigma_v \rho \\
    \sigma_u \sigma_v \rho & \sigma_v^2
\end{matrix} \right) \right]. \nonumber
\end{align}
The linear coefficients $\theta_0$ and $\pi_0$ are taken from their respective estimates in the original series.
Specifically, $\pi_0 = (\pi_{0,0}, \ldots, \pi_{4,0})'$ is computed from the first-stage OLS regression
\begin{equation*}
    r_t = (1 \ \ \ \ W_{t}')  \pi_0 + \hat v_t,    
\end{equation*}
and $\theta_0 = (\theta_{0,0}, \theta_{1,0})'$ is obtained from the 2SLS regression
\begin{equation*}
    \Delta c_t = (1 \ \ \ \ r_{t}) \theta_0 + \hat u_t,    
\end{equation*}
where $W_t$ are used as instruments.

We specify the nonlinear term $f(W_t^*) = \delta_{0,0} + \delta_{1,0}'W_t^* + \delta_{2,0}' (W_{t}^* \odot W_{t}^*) $ as a quadratic function in $W_t^*$ without interaction terms, where $\odot$ denotes the elementwise (Hadamard) product.
The coefficients $\delta_0 = (\delta_{0,0}, \delta_{1,0}', \delta_{2,0}')'$ are determined by the OLS regression of $\hat {v}_t$ on $(1,\ W_t, \ W_t\odot W_t)$, where $\hat{v}_t$ denote the fitted residuals from the first-stage regression.
This choice of $\delta_0$ is intended to replicate the orthogonality between $W_t$ and $f(W_t)$ in the simulated series.
The identification of $\theta_{0}$ based on the given conditional moment restrictions becomes stronger as $|\bar{\pi}|$ increases. 
However, in linear IV models, the identification strength should remain largely unaffected.\footnote{The comparison between conditional and unconditional moments is not new and is most recently discussed at length in \citet{Antoine:Lavergne}.}

The disturbance terms $(u_t^*, v_t^*)$ follow a bivariate normal distribution, drawn independently across periods.
The homoscedastic error variances, $\sigma_u^2$ and $\sigma_v^2$, are taken from the corresponding sample variances of $\hat u_t$ and $\hat v_t$, respectively.
Their correlation coefficient is set to $\rho = 0.8$, which determines the degree of endogeneity.

The data are generated recursively over periods.
Let $(\Delta c_{-1}^*, \Delta c_{0}^*)$ be the initial values, set to the observed values in the data and held constant across simulated series.
In the $t$-th iteration step, where $t \in \{ 1,2,\ldots,n \}$, we first define $W_t^* = (\Delta c_{t-2}^*, W_{-1,t})$, where $W_{-1,t}$ comprises all components of $W_t$ except the twice-lagged consumption growth.
Next, we generate $r_t^*$ and $\Delta c_t^*$ sequentially according to the DGP specified in \eqref{eq:monte-carlo-DGP}.
This procedure is iterated until reaching the final period.

We focus on constructing a confidence set for $\theta_{1,0}$ based on the conditional moment restriction
\begin{equation}
\label{eq:monte-carlo-conditional-moment-model}
    \mathbb{E}[\left. \Delta c_t^* - \mathbb{E}[\Delta c_t^*] - \theta_{1,0} (r_t^* -\mathbb{E}[r_t^*]) \right| W_t^*] = 0.
\end{equation}
The strength of identification can be adjusted by varying the constant $\bar{\pi}$ associated with the nonlinear term.
In the baseline specification, we set $\bar{\pi} = 2$ to make the variation in the linear and nonlinear components comparable to each other.

We follow the same procedure as outlined in Section~\ref{sec:emp2}, which involves demeaning both residuals and exponential weights.
To construct the test statistic, we calculate the demeaned generalized residuals as $\widehat U_t(\theta_1) = \Delta c_t^* - \bar \Delta c^* - \theta_1 (r_{t}^* - \bar r^*)$, where $\bar \Delta c^* = \frac 1 n \sum_{t=1}^n \Delta c_t^*$ and $\bar r^* = \frac 1 n \sum_{t = 1}^ n r_{t}^*$.
The exponential weights are computed based on $\widetilde{W}_t^*$, defined analogously to $\widetilde{W}_t$.
A similar demeaning is also applied to the exponential weights.

The test statistic is computed as defined in \eqref{test-stat-def-1}.
To compute this maximum, we use the grid search (GS) method, adopting the same grid $\Gamma'= \{ -5,-4.5, -4,\ldots, 5 \}^4$ over $[-5,5]^4$ as in Section~\ref{sec:emp2}.
Despite concerns regarding potential power loss, we opted for the GS method over the PSO algorithm for simulations.
This decision was based on its computational efficiency gained by using vectorized code.
Some preliminary experiments with more refined grids support that $\Gamma'$ successfully generates sufficient powers.

The optimal penalty level is selected from $\Lambda = \{0.1 j : j=0,\ldots, 10 \} \subset [0, 1]$. To compute the optimal $\lambda$, we employ the following multiplier bootstrap:
\begin{align*}
    \widehat M_{n, *,B}(\gamma) &= \frac 1 n \sum_{t = 1}^ n \left\{ \eta_t^* \widehat U_t(\theta_{1}) - \frac{B }{\sqrt n}(r_t^* - \bar r^*)\right\} \left\{ \exp(W_t^*{}' \gamma) - \frac 1 n \sum_{t = 1}^ n \exp(W_t^*{}' \gamma)\right\}, \\
    \widehat s_{n, *,B}^2(\gamma) &= \frac 1 n \sum_{t = 1}^ n \left[ \left\{ \eta_t^* \widehat U_t(\theta_{1}) - \frac{B }{\sqrt n}(r_t^* - \bar r^*)\right\}\left\{ \exp(W_t^*{}' \gamma) - \frac 1 n \sum_{t = 1}^ n \exp(W_t^*{}' \gamma)\right\} \right]^2,
\end{align*}
where $B/\sqrt n = 2/\sqrt n$ approximately corresponds to $0.2$ on the actual scale of $\theta_1$, and $\eta_t^*$ are drawn independently from $N(0, 1)$.
The optimal penalty is then calibrated for each $\theta_1 \in \Theta_1 := \{ \theta_{1,0} + 0.1 j : |j|\le 6\}$ following the procedure specified in Section~\ref{sec:choice-penalty-pre-estimated}. The number of bootstrap replications is set to $R = 5000$, mirroring the empirical application. Summary statistics regarding the distribution of optimal $\lambda$ are presented in Table~\ref{tab:selection-optim-penalty} across various values of $\theta_1$ in $\Theta_1$.

We assess the size and power properties of our optimal penalized and unpenalized tests in comparison to those of the AR test and the Wald tests based on the 2SLS estimator and that of \citet{dominguez2004consistent}.
Figure~\ref{fig:power-curves-4IVs-Yogo-MC} displays the power curves depicting the performance of each method.
This indicates that only the AR and our tests maintain a size close to the nominal level of 0.1.
In contrast, the Wald tests based on the estimators of $\theta_1$ exhibit significant size distortion.
Notably, our optimal test not only achieves nearly accurate size control but also enhances the power compared to both the unpenalized test (with or without size adjustments) and the AR test.\footnote{There is seemingly no formal guidance on how to adjust the undersize of the unpenalized test to fairly compare its power with the other tests. For visual presentation, we depict `Unpen. w/ shift' by shifting the power curve up parallel to match the nominal size of 0.1 at $\theta_1 - \theta_{1,0} = 0$.}
This underscores our motivation that $\ell_1$-regularization can enhance the power of the test by selecting relevant information from the conditioning variables.

We conducted additional experiments where we included three- and four-period lags of the consumption growth rate as supplementary instruments. 
This amounts to adding noise to the IVs in our setting.
Regardless of the value of $\bar{\pi}$, it holds that 
\begin{equation*}
    \mathbb{E}^*\left[\left. \frac{\partial U_t(\theta_{1,0})}{\partial \theta_1} \right| W_t^*, \Delta c_{t-3}^*, \Delta c_{t-4}^* \right] = \mathbb{E}^*\left[\left. \frac{\partial U_t(\theta_{1,0})}{\partial \theta_1}\right| W_t^* \right].
\end{equation*}
Since $\Delta c_{t-3}^*$ and $\Delta c_{t-4}^*$ fail to provide additional explanatory power, it is expected that incorporating these variables as instruments would lead to reduced statistical power, compared to using the original set of 4 IVs.
Table~\ref{tab:power-6-IVs-quadratic-spec} confirms this insight; however, it also reveals that the degree of power reduction is much less severe for the optimally penalized test than for the unpenalized test and the AR test.\footnote{As a caveat, it is unlikely but possible that power reduction might have occurred due to the reduced time span from 105 to 103.}
This implies that the relative advantage of optimal penalization increases as the number of uninformative or less informative instruments becomes large.

We also investigated the size and power of each test under an alternative specification of $\bar{\pi}$, where we nullified the nonlinear term by setting $\bar{\pi}$ to $0$ while keeping the other coefficients the same.
Table~\ref{tab:power-6-IVs-linear-spec} presents the results from this analysis.
There are two noteworthy observations.
First, our tests nearly hold the correct size even when IVs are weak both conditionally and unconditionally.
This demonstrates the robustness of our method to the presence of weak IVs.
Second, excluding the nonlinear term leads to diminished powers of the penalized and unpenalized tests relative to the AR test.
This indicates that a substantial portion of the power in our tests stems from an underlying nonlinear relationship between the noncentrality term and IVs.
Furthermore, our results suggest that the nonlinearity term, if present, can serve as a valuable source of identification in IV models.

\section{Conclusions}\label{sec:concl}

We have developed an inference method for a vector of parameters using an
$\ell_1$-penalized  maximum statistic.
Our inference procedure is based on the multiplier bootstrap
and combines inference with model selection  to improve the power of the  test.
We have recommended solving a data-dependent max-min problem to select the penalization tuning parameter.
We have demonstrated the efficacy of our method using two empirical examples.

There are multiple directions to extend our method. First, we may consider a panel data setting where the number of conditioning variables may grow as the time series dimension increases. 
Second, unknown parameters may include an unknown function
\citep[e.g.,][]{chamberlain1992efficiency,Newey:Powell:03,Ai:Chen:03,chen2015inference}.
In view of results in \cite{Breunig:Chen},
Bierens-type tests without penalization might not work well when the parameter of interest is a nonparametric function.
It would be interesting to study whether and to what extent our penalization method improves power for nonparametric inference.
Third, multiple conditional moment restrictions or a continuum of conditional moment restrictions (e.g., conditional independence assumption) might be relevant in some applications.
Fourth, it would be interesting to extend our method for empirical industrial organization. For instance, \citet{Gandhi:Houde:NBER} proposed a set of relevant instruments from conditional moment restrictions to avoid the weak identification problem.  It is an intriguing possibility to combine our approach with their insights into \citet{BLP}.
All of these extensions call for substantial developments in both theory and computation.

\section{Proofs}\label{sec:proofs}

\begin{proof}[Proof of Theorem \ref{thm:fixed p null}]
	First, we show the stochastic equicontinuity of the processes $ \sqrt{n} M_n (\gamma)  $ and $  s_n (\gamma)  $ for \eqref{eq:FCLT} and \eqref{eq:ULLN}.
	Due to the boundedness of $ \Gamma $ and $ W_i $, $ U_i \exp(W_i ' \gamma) $ is Lipschitz continuous  with a bound $ K |U_i| \| \gamma_{1} - \gamma_{2} \|  $ for some $ K $ and for any $ \gamma_1 $ and $ \gamma_{2} $. Then, this Lipschitz property and the existence of moment of some $c > d$ implies due to Theorem 2 in \cite{Hansen:96} that the empirical process $ \sqrt{n} M_n (\gamma) $ is stochastically equicontinuous. 
	The Lipschitz continuity and the ergodic theorem also imply that $ s_n (\gamma) $  is stochastically equicontinuous. 
	
	Next, the martingale difference sequence central limit theorem and the ergodic theorem yield the desired finite-dimensional convergence 	 for \eqref{eq:FCLT} and \eqref{eq:ULLN} under Assumption \ref{assum0}; see e.g. \cite{davidson1994stochastic}'s Section 24.3 and 13.4.
	 
Finally, for the convergence of ${T}_{n}(\lambda)$, note that both $ \Lambda $ and $ \Gamma $ are bounded, implying $ \lambda \|\gamma\|_{1} $ is uniformly continuous. Thus, the process $ \frac{\left|\mathcal{M}\left(\gamma\right)\right|}{s\left(\gamma\right)}-\lambda \|\gamma\|_{1} $ converges weakly in $ \ell^{\infty}(\Gamma \times \Lambda) $, the space of bounded functions on $ \Gamma \times \Lambda $, and the weak convergence of 	
	 $T_n ( \lambda) $ follows from the continuous mapping theorem since (elementwise) $ \sup $ is a continuous operator.
\end{proof}

\begin{proof}[Proof of Theorem \ref{thm:fixed p null:bootstrap}]
For the same reason as in the proof of Theorem \ref{thm:fixed p null}, it is sufficient to verify the conditional finite dimensional convergence. As $ \eta_{i}^{\ast} g(X_i,\bar{\theta }) \exp(W_i'\gamma) $ is a martingale difference sequence, we verify the conditions in \cite{hall:heyde:martingale}'s  Theorem 3.2, a conditional central limit theorem for martingales. 
Their first condition that  
\[  n^{-1/2}\sup_{i}\left\vert \eta_{i}^{\ast} g(X_i,\bar{\theta } ) \exp(W_i'\gamma)\right\vert \overset{p}{\rightarrow } 0  \] and the last condition
$\mathbb{E} \left[ \sup_{i}\eta_{i}^{*2} g(X_i,\bar{\theta } )^2 \exp(2W_i'\gamma)\right] =O(n) $ are straightforward since $ \exp(W_i'\gamma) $ is bounded and $ |\eta_{i}^{\ast} g(X_i,\bar{\theta } )|$
has a finite $ c  $ moment for $ c>2 $.
Next, \[  n^{-1}\sum_{i=1}^{n} \eta_{i}^{*2} g(X_i,\bar{\theta } )^2 \exp(2W_i'\gamma) \overset{p}{\rightarrow} \mathbb{E} \left[ g(X_i,\bar{\theta } )^2 \exp(2W_i'\gamma) \right] \]
by the ergodic theorem.
This completes the proof.
\end{proof}

\begin{proof}[Proof of Theorem \ref{thm:consistency}]
	It follows from Lemma \ref{Lem:Bierens} that
	$$
	\frac{\left| \mathbb{E} \left[ g\left(X_i,\bar{\theta}\right)\exp(W_i'\gamma) \right] \right|}
	{\sqrt{\mathbb{E} \left[ g^2 \left(X_i,\bar{\theta}\right)\exp(2W_i'\gamma) \right] }}>0
	$$
	for almost every $ {\gamma \in \Gamma} $.
	Then, the result follows from the ergodic theorem.	
\end{proof}

\begin{proof}[Proof of Lemma \ref{lem:power gain}] 
Since the derivative of $ T(\lambda,b) $ with respect to  $\lambda$ at $\lambda=0$ is $-\left\Vert \tilde{\gamma}(b) \right\Vert_1 $ 
for each $b=0,B$ , the difference between the alternative and null limit experiments $ T(0,B)-T(0,0) $ at $\lambda=0$ is stochastically dominated by $ T(\lambda,B)-T(\lambda,0) $ at a positive $\lambda$. This implies that the rejection probability of the test at any prespecified significance level is bigger at the experiment with the positive $\lambda$. Thus,  $ T(\lambda, 0)-T(0, 0)  < T(\lambda, B) - T(0,B) \quad \text{a.s.} $
\end{proof}

\begin{proof}[Proof of Theorem~\ref{thm:consistency of lambda}]
	We begin with showing that	$c_{\alpha}^{*} (\lambda) \cvp c_{\alpha} (\lambda)$ uniformly in $ \Lambda $.  First, recall that the inverse map on the space of the distribution function $ F $ that assigns its $ \alpha $-quantile is Hadamard-differentiable at $ F $ provided that $ F $ is differentiable at $ F^{-1}(\alpha) $ with a strictly positive derivative; see e.g. Section 3.9.4.2 in \cite{VW:1996:book}.  Therefore, for the uniform consistency of the bootstrap, it is sufficient to show that $  F^{*}_{\lambda} (x)  \cvp  F_{\lambda} (x) $ uniformly $ (x,\lambda) \in A_{\alpha} $. 
	However, this is a direct consequence of the conditional stochastic equicontinuity and the convergence of the finite-dimensional distributions established in Theorem \ref{thm:fixed p null:bootstrap}.
	
	Next, the preceding step implies that $ T_{n,*,B}(\lambda) - c^*_{\alpha}(\lambda) $ converges weakly to 
	\begin{equation*}
 \sup_{\gamma \in \Gamma} \left\{ \left|\frac{\mathcal{M}\left(\gamma\right)+
\mathbb{E} \left[ \exp(W_i'\gamma)G_i B \right]}{s\left(\gamma\right)}\right|-\lambda\left\Vert \gamma\right\Vert _{1} \right\}
- c_{\alpha}(\lambda),
\end{equation*}
which is the limit in \eqref{eq:calT}. 
	This in turn yields	the uniform convergence of $ \mathcal{R}_n (\lambda,B) $ in probability	to $ \mathcal{R} (\lambda,B) $. Since  $ \mathcal{R} $ is continuous on a compact set, the standard consistency argument results in that $ d(\widehat{\lambda} , \Lambda_0) \cvp 0  $.
	
	For the same reason, $ T_{n}(\lambda) - c^*_{\alpha}(\lambda) $ converges weakly to $ T(\lambda) - c_{\alpha}(\lambda) $ and thus the probability that $ T_{n}(\widehat{\lambda}) \geq c^*_{\alpha}(\widehat{\lambda}) $ converges to $\alpha$ for any sequence $\widehat{ \lambda}$ due to the weak convergence.

 \end{proof}

\begin{proof}[Proof of Theorem \ref{thm:subvector}]
Write $g_{i}\left(\theta\right)$ and $G_{2i}\left(\theta\right)$
for $g\left(X_{i},\theta\right)$ and $G_2\left(X_{i},\theta\right)$,
respectively.
Note that for $ \theta_{1}=\theta_{1,0}$,
\begin{align*}
\sqrt{n}\widehat{M}_{n}(\gamma) & =\frac{1}{\sqrt{n}}\sum_{i=1}^{n}g_{i}\left(\theta_{0}\right)\left(\exp(W_{i}'\gamma) - \frac{1}{n}\sum_{j=1}^n \exp(W_{j}'\gamma)\right)\\
 & \quad+\frac{1}{\sqrt{n}}\sum_{i=1}^{n}\zeta_{ni}
 \frac{1}{n}\sum_{j=1}^{n}G_{2j}\left(\theta_{0}\right)\left(\exp(W_{j}'\gamma) - \frac{1}{n}\sum_{l=1}^n \exp(W_{l}'\gamma)\right)+o_{p}\left(1\right)
\end{align*}
due to Assumption \ref{ass:expansion}. Then, $ \frac{1}{n}\sum_{j=1}^{n}G_{2j}\left(\theta_{0}\right)\exp(W_{j}'\gamma)$ and $  \frac{1}{n}\sum_{l=1}^n \exp(W_{l}'\gamma) $ converge uniformly in probability and
$ \frac{1}{\sqrt{n}}\sum_{i=1}^{n} a_1 g_{i}\left(\theta_{0}\right)\left(\exp(W_{i}'\gamma) - \mathbb{E} \exp(W_{j}'\gamma)\right) + a_2 \zeta_{ni} $ is P-Donkser
for any real $ a_1 $ and $ a_2 $ for the same reasoning as in the proof of Theorem \ref{thm:fixed p null}.
Similarly, the uniform convergence of $ \widehat{s}_n^2 (\gamma)  $ follows since $ g() $ is Lipschitz in $ \theta $ by \eqref{eq:expansion of g}.
\end{proof}

\bibliographystyle{economet}
\bibliography{CLS.bib}

\newpage

%
\newpage
\thispagestyle{empty}

\begin{figure}[htb!]
	\caption{Graphical Representation of Power Improvements via Penalization}
	\label{fig-power}
	\begin{center}
		\includegraphics[scale=0.7]{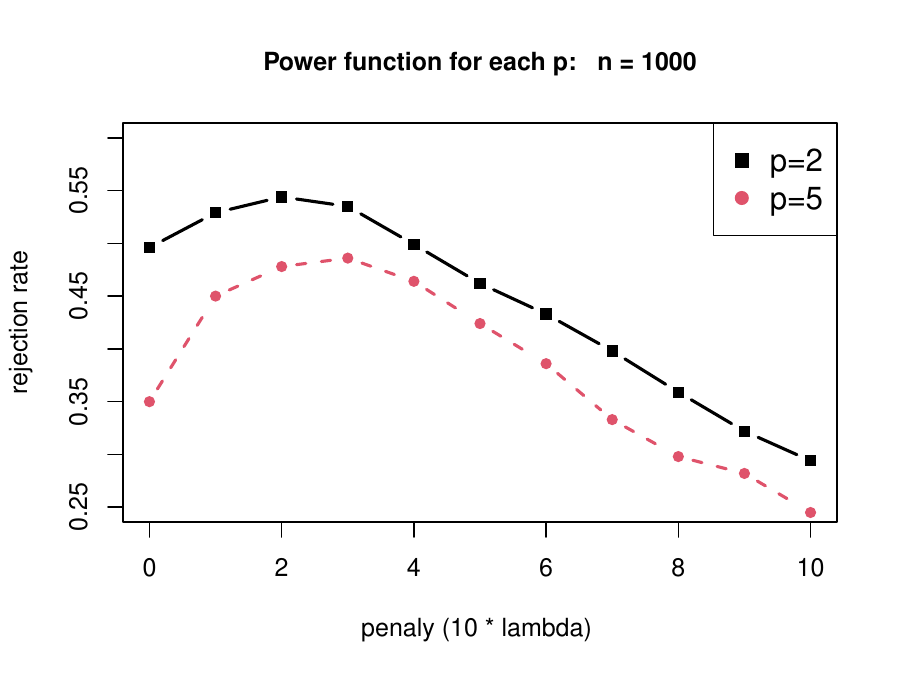}	
	\end{center}
 \begin{center}
\parbox{6in}{Notes: Figure \ref{fig-power} plots the ``theoretical'' power functions of our proposed test, where the power curves are obtained via Monte Carlo simulations described as in the main text with $\Gamma = [-5,5]^p$.}
\end{center}
\end{figure}
\newpage
\thispagestyle{empty}

   \begin{figure}[!htbp]
    	\caption{Simulated Local Powers Computed Using PSO and GS}
    	\label{fig:local-powers-average-and-pointwise}
    	\begin{center}
     	\includegraphics[scale=0.65]{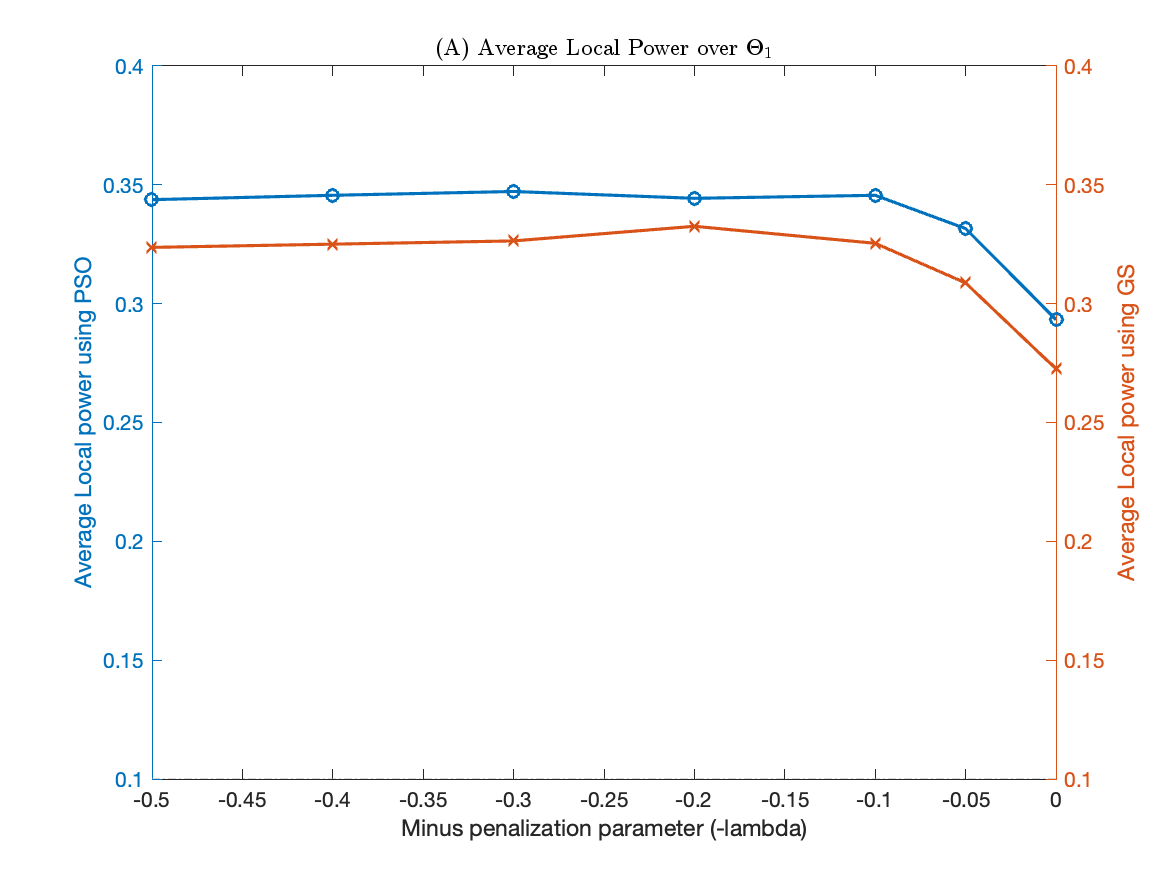}
         \includegraphics[scale=0.65]{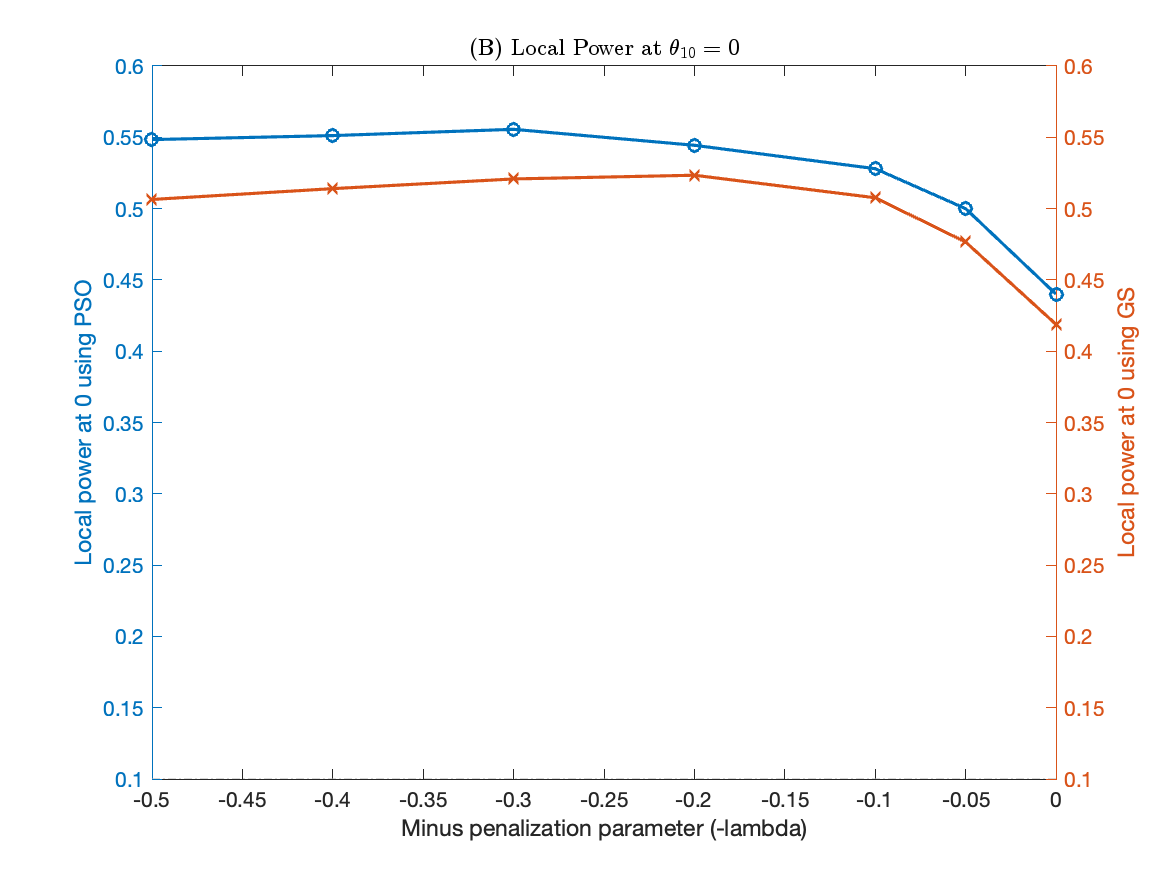}
    	\end{center}
     \begin{center}
     \parbox{6in}{Notes: Figure~\ref{fig:local-powers-average-and-pointwise} shows the simulated local powers using both algorithms, computed as per the eq. \eqref{eq:calT}. Panel~A of Figure~\ref{fig:local-powers-average-and-pointwise} displays the local powers averaged over $\Theta_1$ and Panel~B exhibits the local powers evaluated at a specific $\theta_{1,0} = 0$.
    }	
    \end{center}
    \end{figure}
\newpage
\thispagestyle{empty}

\begin{figure}[!htbp]
	\caption{Power Curves of Competing Tests under the Baseline Specification}
	\label{fig:power-curves-4IVs-Yogo-MC}
	\begin{center}
		\includegraphics[scale=0.65]{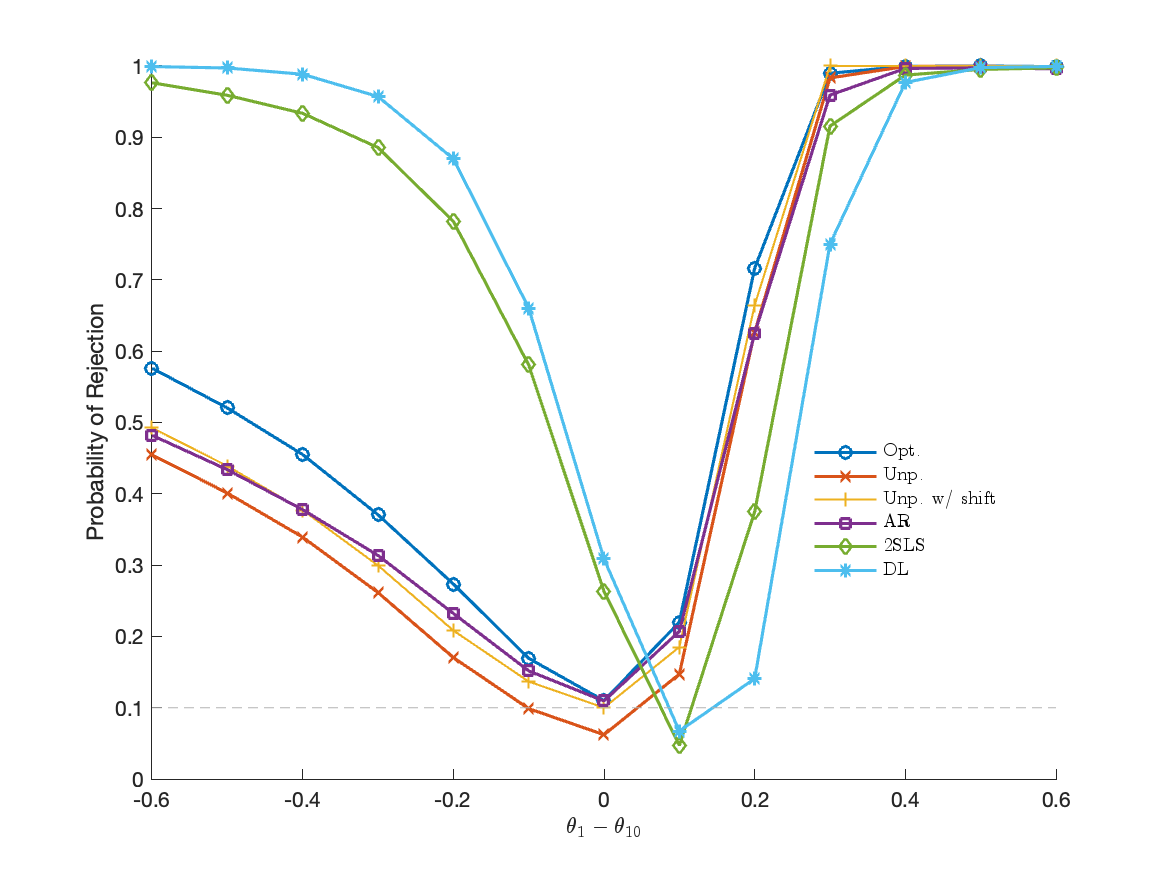}
        \parbox{5in}{\footnotesize{Notes: The nominal size is set to $0.1$. The yellow line, labeled as `Unpen. w/ shift,' is derived by adjusting the size and power of the unpenalized test in parallel to achieve the actual size consistent with the nominal size 0.1. The number of simulation replications used for generating Figure~\ref{fig:power-curves-4IVs-Yogo-MC} is 5000.}
        }
	\end{center}
\end{figure}

\clearpage

\thispagestyle{empty}

\begin{table}[htbp]
\begin{center}
\caption{Values of Max Statistic $T(\lambda, a)$}
\label{tab:true-max}
\begin{tabular}{C{0.075\textwidth} C{0.12\textwidth} C{0.12\textwidth} C{0.12\textwidth} C{0.12\textwidth}}
    \toprule
\multirow[c]{2.5}{*}{$a$}    & \multicolumn{4}{c}{$\lambda$} \\
    \cmidrule(lr){2-5}
    & 0.30 & 0.20 & 0.10 & 0.00 \\ 
    \midrule
    1 & 1.807 & 1.807 & 1.850 &  2.120 \\ 
    2 & 1.807 & 1.807 & 1.850 &  2.176 \\ 
    3 & 1.807 & 1.807 & 1.890 &  2.325 \\ 
    4 & 1.807 & 1.807 & 1.901 &  2.410 \\ 
    5 & 1.807 & 1.807 & 1.901 &  2.448 \\ 
\bottomrule
\end{tabular}
\end{center}
\begin{center}
        \parbox{6in}{\footnotesize{Notes: The values reported in Table~\ref{tab:true-max} are computed with a swarm size of 5000. The value of $\theta_{1,0}$ is fixed at the value of corresponding 2SLS estimate, namely $-0.028$.}}
    \end{center}
\end{table}
%
\newpage
\thispagestyle{empty}
\begin{table}[htbp]
\begin{center}
\caption{Mean and Standard Deviation of Computation Time for Max Statistic}
\label{tab:computation-times}
\begin{tabular}{C{0.075\textwidth} C{0.12\textwidth} C{0.12\textwidth} C{0.12\textwidth} C{0.12\textwidth}}
    \toprule
\multirow[c]{2.5}{*}{$a$}    & \multicolumn{4}{c}{$\lambda$} \\
    \cmidrule(lr){2-5}
    & 0.30 & 0.20 & 0.10 & 0.00 \\ 
    \midrule
    \multirow[c]{2}{*}{$1$} & 3.440 & 3.408 & 3.466 & 3.584 \\ 
 & \footnotesize{(0.070)} & \footnotesize{(0.097)} & \footnotesize{(0.114)} & \footnotesize{(0.110)} \\ 
\multirow[c]{2}{*}{$2$} & 3.409 & 3.389 & 3.455 & 3.560 \\ 
 & \footnotesize{(0.054)} & \footnotesize{(0.123)} & \footnotesize{(0.121)} & \footnotesize{(0.131)} \\ 
\multirow[c]{2}{*}{$3$} & 3.418 & 3.424 & 3.572 & 3.639 \\ 
 & \footnotesize{(0.054)} & \footnotesize{(0.125)} & \footnotesize{(0.224)} & \footnotesize{(0.240)} \\ 
\multirow[c]{2}{*}{$4$} & 3.424 & 3.466 & 3.605 & 3.757 \\ 
 & \footnotesize{(0.104)} & \footnotesize{(0.161)} & \footnotesize{(0.223)} & \footnotesize{(0.579)} \\ 
\multirow[c]{2}{*}{$5$} & 3.259 & 3.299 & 3.459 & 3.701 \\ 
 & \footnotesize{(0.059)} & \footnotesize{(0.067)} & \footnotesize{(0.245)} & \footnotesize{(0.871)} \\ 
 \bottomrule
\end{tabular}
\end{center}
\begin{center}
        \parbox{6in}{\footnotesize{Notes: The unit of measurement is seconds in Table~\ref{tab:computation-times}. Table~\ref{tab:computation-times} presents the average elapsed times for computing the penalized maximum statistic $T(\lambda, a) \equiv \max_{\gamma \in [-a,a]^4} [\widehat Q_n(\theta_{1,0},\gamma) - \lambda \|\gamma\|_1]$ across various $a$ and $\lambda$ values. The standard deviations are displayed in parentheses. The averages and standard deviations are computed based on 1000 repetitions.}}
    \end{center}
\[  \]
\[  \]
\[  \]
\[  \]
\end{table}
\newpage
\thispagestyle{empty}
\begin{table}[htbp]
   \centering
   \caption{Confidence Intervals Using Various Testing Methods}
   \label{tab:confidence-intervals}
   \medskip
   \begin{tabular}{C{0.14\textwidth} C{0.14\textwidth} C{0.14\textwidth} C{0.14\textwidth}}
    \toprule
    \multicolumn{2}{c}{Optim.} & \multicolumn{2}{c}{Unpen.}  \\
    \cmidrule(lr){1-2} \cmidrule(lr){3-4} 
    PSO & GS & PSO & GS  \\
    \midrule
    $[-0.30, 0.15]$ & $[-0.34, 0.15]$ & $[-0.43,0.13]$ & $[-0.58, 0.14]$   \\
    \bottomrule
   \end{tabular}
   \medskip
   \begin{center}
   \parbox{6.3in}{\footnotesize{Notes: The figures within the brackets denote the 95\% confidence intervals (CIs) derived from each testing and computation method. The first two columns labeled as 'Optim.' stands for our optimal CIs using the PSO and GS algorithms, respectively. Likewise, the next two columns represent the CIs from our unpenalized test.}}
    \end{center}
\end{table}

\newpage
\thispagestyle{empty}

\begin{table}[htbp]
\caption{Selected Optimal $\lambda$ Values across Different $\theta_1$ Values}
\label{tab:selection-optim-penalty}
\begin{center}
\begin{tabular}{C{0.12\textwidth} C{0.075\textwidth} C{0.075\textwidth} C{0.075\textwidth} C{0.075\textwidth} C{0.075\textwidth} C{0.075\textwidth} C{0.075\textwidth}}
\toprule
    \multirow[c]{2.5}{*}{stats.} & \multicolumn{7}{c}{$\theta_1 -\theta_{1,0}$} \\
     \cmidrule(lr){2-8}
     & $-$0.6 & $-$0.4 & $-$0.2 & 0 & 0.2 & 0.4 & 0.6 \\
    \toprule
    mean $\lambda$ & 0.281 & 0.266 & 0.237 & 0.208 & 0.253 & 0.353 & 0.412 \\ 
pos. prob. & 0.959 & 0.962 & 0.966 & 0.968 & 0.974 & 0.982 & 0.987 \\ 
    \bottomrule
    \end{tabular}
\end{center}
\begin{center}
    \parbox{5.6in}{Notes: The first row represents the averages of the estimated optimal penalty and the second row presents the probability of choosing a strictly positive penalty. Each column represents the optimal $\lambda$ choice at a given hypothesized value of $\theta_1$. 
    The number of simulation replications is 5000, and for each simulation draw, 5000 bootstrap replications are conducted.}
\end{center}
\end{table}

\newpage
\thispagestyle{empty}

\begin{table}[htbp]
\caption{Size and Power of Each Test under the Baseline Specification $(\bar{\pi}=2)$}
\label{tab:power-6-IVs-quadratic-spec}
\begin{center}
\begin{tabular}{C{0.225\textwidth} C{0.1\textwidth} C{0.1\textwidth} C{0.1\textwidth} C{0.1\textwidth} C{0.1\textwidth}}
    \toprule
    \multirow[c]{2.5}{*}{Tests} & \multicolumn{2}{c}{$\theta_{1,0}$} & \multicolumn{3}{c}{$\theta_{1,0}+ 2/\sqrt{n}$} \\ 
    \cmidrule(lr){2-3} \cmidrule(lr){4-6}
     & 4 IVs & 6 IVs & 4 IVs & 6 IVs & Diff. \\ 
    \midrule
    Optimal & 0.106 & 0.124 & 0.728 & 0.657 & 0.071 \\ 
    Unpenalized & 0.062 & 0.041 & 0.640 & 0.463 & 0.177 \\ 
    Unpen. w/ shift & 0.100 & 0.100 & 0.678 & 0.522 & 0.156 \\ 
    2SLS & 0.270 & 0.403 & 0.391 & 0.313 & 0.078 \\ 
    AR & 0.118 & 0.123 & 0.641 & 0.568 & 0.073 \\ 
    DL & 0.311 & 0.882 & 0.145 & 0.169 & $-$0.024 \\ 
\bottomrule
\end{tabular}
\end{center}

\begin{center}
    \parbox{5.5in}{\footnotesize{Notes: The nominal level is set at $0.1$. The first two columns denote the sizes of the tests using 4 and 6 IVs, respectively. The third and fourth columns show the powers of the tests under the local alternative $\theta_{1,0} + \frac{2}{\sqrt n} \approx \theta_{1,0} + 0.2$. The final column displays the reductions in power by including less informative IVs.
    The number of simulation draws is 5000, and for each simulation draw, 5000 bootstrap replications are conducted for the tests in the first three rows in the table.}}
\end{center}
\end{table}

\newpage
\thispagestyle{empty}

\begin{table}[htbp]
    \caption{Size and Power of Each Test under the Linear Specification $(\bar{\pi}=0)$}
\label{tab:power-6-IVs-linear-spec}
\begin{center}
\begin{tabular}{C{0.225\textwidth} C{0.1\textwidth} C{0.1\textwidth} C{0.1\textwidth} C{0.1\textwidth} C{0.1\textwidth}}
    \toprule
    \multirow[c]{2.5}{*}{Tests} & \multicolumn{2}{c}{$\theta_{1,0}$} & \multicolumn{3}{c}{$\theta_{1,0}+ 2/\sqrt{n}$} \\ 
    \cmidrule(lr){2-3} \cmidrule(lr){4-6}
     & 4 IVs & 6 IVs & 4 IVs & 6 IVs & Diff. \\ 
    \midrule
    Optimal & 0.120 & 0.129 & 0.617 & 0.572 & 0.045 \\ 
    Unpenalized & 0.054 & 0.039 & 0.436 & 0.320 & 0.116 \\ 
    Unpen. w/ shift & 0.100 & 0.100 & 0.428 & 0.381 & 0.047 \\ 
    2SLS & 0.269 & 0.380 & 0.438 & 0.371 & 0.067 \\ 
    AR & 0.113 & 0.108 & 0.684 & 0.633 & 0.051 \\ 
    DL & 0.425 & 0.908 & 0.113 & 0.222 & $-$0.109 \\ 
\bottomrule
\end{tabular}
\end{center}
\end{table}


\newpage
\clearpage
\appendix

\renewcommand{\thepage}{A-\arabic{page}}
\setcounter{page}{1}

\renewcommand{\theequation}{\thesection\arabic{equation}}
\setcounter{equation}{0}
\renewcommand{\thetable}{\thesection\arabic{table}}
\setcounter{table}{0}
\renewcommand{\thefigure}{\thesection\arabic{figure}}
\setcounter{figure}{0}

\section*{\Large Online Appendix to ``Inference for parameters identified by conditional
moment restrictions using a generalized Bierens
maximum statistic''}

\section{Testing Rational Unbiased Reporting of Ability Status}\label{sec:emp1}

\citet[BBCCR hereafter]{SSA:2004} examine whether
a self-reported disability status is a conditionally unbiased indicator of Social Security Administration (SSA)'s disability award decision.
Specifically, they test if $\tilde{U}_i = \tilde{A}_i - \tilde{D}_i$ has mean zero conditional on covariates $W_i$, where $\tilde{A}_i$ is the SSA disability award decision and $\tilde{D}_i$ is a self-reported disability status indicator.
Their null hypothesis is $H_0: \mathbb{E} [ \tilde{A}_i - \tilde{D}_i | W_i] = 0$, which is termed as the hypothesis of
\emph{rational unbiased reporting} of ability status (RUR hypothesis).\footnote{In this example, the null hypothesis is simpler than the empirical example in the main text because there is no parameter to estimate. There are alternative tests applicable \citep[e.g.,][among others]{escanciano2006consistent,shao2014martingale} but we have not tried to implement them.}
They use  a battery of tests, including a modified version of \citet{bierens1990consistent}'s original test, and conclude that they fail to reject the RUR hypothesis.
In fact, their Bierens test has the smallest $p$-value of 0.09 in their test results (see  Table II of their paper).
In this section, we revisit this result and apply our testing procedure.

Table \ref{Tab:twoway} shows a two-way table of $\tilde{A}_i$ and $\tilde{D}_i$
and Table \ref{Tab:sumstat} reports the summary statistics of $\tilde{A}_i$ and $\tilde{D}_i$ along those of covariates $W_i$.
After removing individuals with missing values in any of covariates, the sample size is $n=347$ and the number of covariates is $p=21$.\footnote{According to
 Table I in BBCCR, the sample size is 393 before removing observations with the missing values; however, there are only 388 observations in the data file archived at the Journal of Applied Econometrics web page.  After removing missing values, the size of the sample extract we use is $n=347$, whereas the originally reported sample size is $n=356$ in Table 2 in BBCCR.}

\begin{table}[htbp]
\caption{\label{Tab:twoway}Self-reported disability and SSA award decision}
\centering
\medskip
\begin{tabular}{lccr}  \hline \hline 
\vspace{-2ex}
& & &  \\ 
                                  & \multicolumn{2}{c}{Self-reported disability ($\tilde{D}$)} & Total  \\
SSA award decision ($\tilde{A}$)  &       0  &   1 & \\  \hline
0 &  35    & 51   &   86     \\
1 &  61   &  200  &   261  \\
Total  &  96  &  251 & 347  \\
\hline \end{tabular}
\end{table}

\begin{table}[htbp]
\caption{\label{Tab:sumstat}Summary Statistics}
\centering
\medskip
\begin{tabular}{lrrrrr}   \hline \hline
Variable & Mean & Stan. Dev. & Min. & Max. & $\widehat{\gamma}$ \\   \hline
SSA award decision ($\tilde{A}$) &	0.75  & 0.43  & 0  & 1  &  \\
Self-reported disability ($\tilde{D}$) &	0.72  & 0.45  & 0  & 1 &  \\   \hline
Covariates &  &  &  &  &  \\
White &	0.57  & 0.50  & 0  & 1  & 0.07 \\
Married &	0.58  & 0.49  & 0  & 1  & -0.16 \\
Prof./voc. training & 0.36  & 0.48  & 0  & 1  & 0.17 \\
Male &	0.39  & 0.49  & 0  & 1  & 0.02 \\
Age at application to SSDI & 55.97  & 4.81  & 33  & 76  & 0.33\\
Respondent income/1000 &	6.19  & 10.28  & 0  & 52  & 0.12 \\
Hospitalization &	0.88  & 1.44  & 0  & 14  & -- \\
Doctor visits &	13.12  & 13.19  & 0  & 90  & -- \\
Stroke &	0.07  & 0.26  & 0  & 1  & -0.90 \\
Psych. problems &	0.25  & 0.43  & 0  & 1  & -- \\
Arthritis &	0.40  & 0.49  & 0  & 1  & -- \\
Fracture &	0.13  & 0.33  & 0  & 1  & -- \\
Back problem &	0.59  & 0.49  & 0  & 1  & -0.13 \\
Problem with walking in room &	0.15  & 0.36  & 0  & 1 & --  \\
Problem sitting &	0.48  & 0.50  & 0  & 1  & -0.03 \\
Problem getting up &	0.59  & 0.49  & 0  & 1  & -0.03 \\
Problem getting out of bed &	0.24  & 0.43  & 0  & 1  & -0.13 \\
Problem getting up the stairs &	0.45  & 0.50  & 0  & 1  & -- \\
Problem eating or dressing &	0.07  & 0.26  & 0  & 1  & -- \\
Prop. worked in $t-1$ &	0.32  & 0.41  & 0  & 1  & 1.32 \\
Avg. hours/month worked &	2.68  & 8.85  & 0  & 60 & --  \\
\hline \end{tabular}
\end{table}

As in Section~\ref{sec:emp2}, we first studentize each of covariates and transform them by $x \mapsto \tan^{-1}(x)$ componentwise. 
The space $\Gamma$ is set as $\Gamma = [-10,10]^p$.
As mentioned before, to compute ${T}_{n}$ in \eqref{test-stat-def}, we use the \texttt{particleswarm} solver in \texttt{Matlab}.
It is computationally easier to obtain 
 ${T}_{n}$ in \eqref{test-stat-def}
  when $\lambda$ is relatively larger.
This is because a relevant space for $\Gamma$ is smaller with a larger $\lambda$.\footnote{The specification of $\Gamma = [-10,10]^p$ in this section is different from that of $\Gamma = [-5,5]^p$ in Section~\ref{sec:emp2}. Because $p$ is larger in the current example, we chose not to consider smaller values of $\lambda$ and also opted to use a larger $a$ in $\Gamma = [-a,a]^p$ to make sure that the value of $a$ is not binding in optimization.}

To choose an optimal $\lambda$ as described in Section \ref{sec:lambda:calibration},
we parametrize the null hypothesis $H_0: \mathbb{E} [ \tilde{U}_i | W_i] = 0$ by $g(X_i, \theta) = \tilde{U}_i - \theta$ in \eqref{eq:CM model} with $\theta_0 = 0$.
First note that $G_i = 1$ in this example. Therefore, for each $\lambda \in \Lambda$, $\mathcal{R}\left(\lambda,B\right)$ is an increasing function of $|B|$.
Thus, it suffices to evaluate the smallest value of $|B|$ satisfying $B \in \mathcal{B}$.
Here, we take it to the sample standard deviation of $\tilde{U}_i$.
For $\lambda$,
we take
$\Lambda = \{1, 0.9, \ldots, 0.2\}$. This range of $\lambda$'s is chosen by some preliminary analysis.
When $\lambda$ is smaller than 0.2, it is considerably harder to obtain stable solutions; thus, we do not consider smaller values of $\lambda$.
Since $\lambda \mapsto {T}_{n}(\lambda)$ is a decreasing function, we first start with the largest value of $\lambda$ and then solves sequentially by lowering the value of $\lambda$, while checking whether the newly obtained solution indeed is larger than the previous solution. This procedure results in a solution path by $\lambda$.

\begin{figure}[htbp]
	\caption{Testing Results}
	\label{fig-test-results}
	\begin{center}
		\includegraphics[scale=0.35]{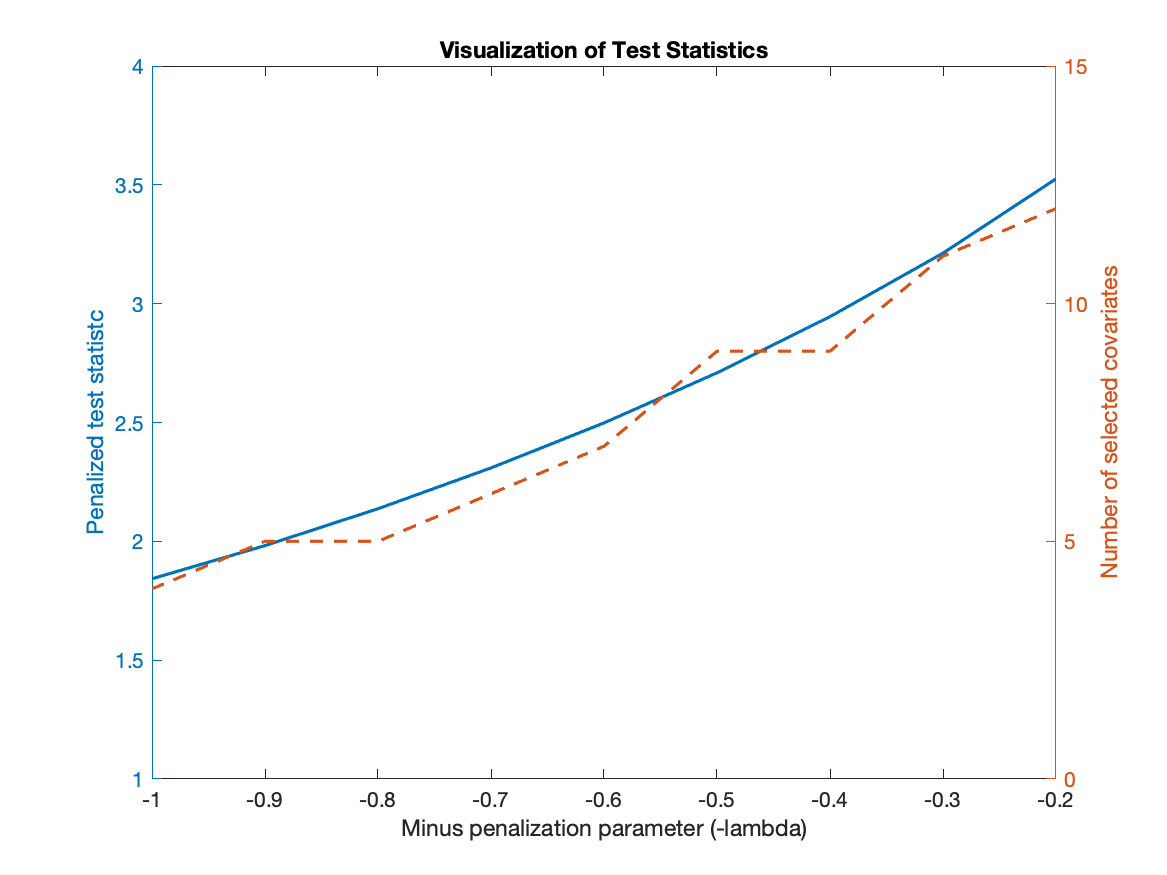} \hspace*{0.3ex}
		\includegraphics[scale=0.35]{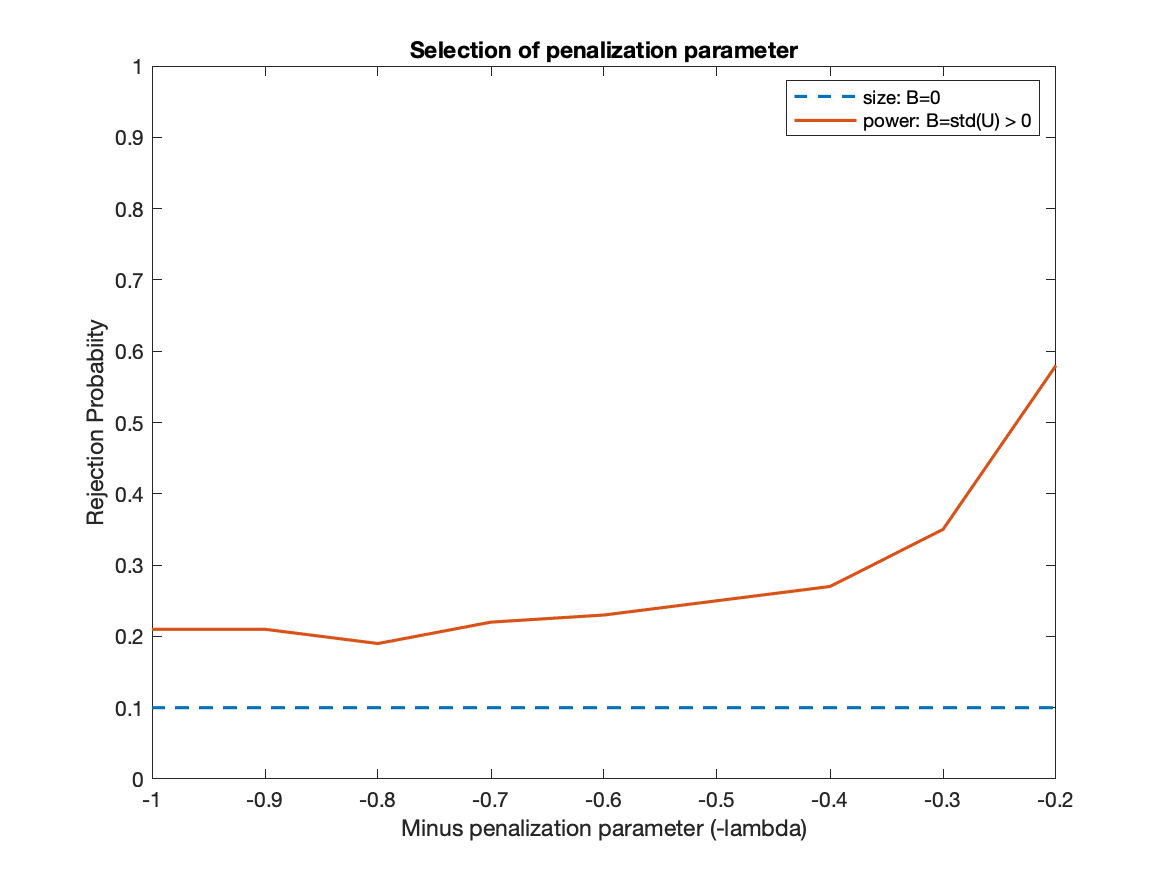} \\ \vspace*{0.5ex}
		\includegraphics[scale=0.72]{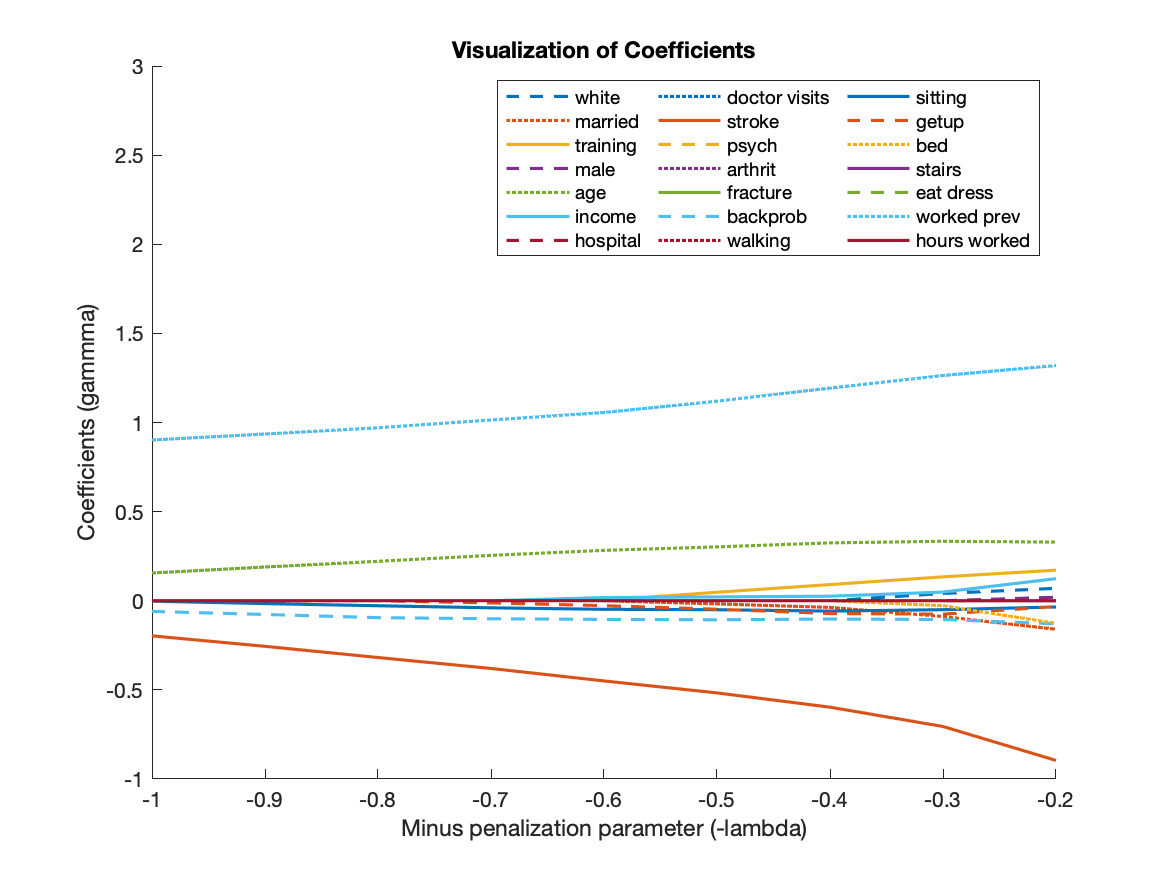}
	\end{center}
\end{figure}

Top-left panel of Figure \ref{fig-test-results} shows the solution path $\lambda \mapsto {T}_{n}(\lambda)$
along with the number of selected covariates, which is defined to be ones whose coefficients are no less than 0.01 in absolute value.  For the latter, 4 covariates are selected with $\lambda = 1$, whereas 12 are chosen with $\lambda = 0.2$.
Top-right panel displays the rejection probability defined in \eqref{RP:bootstrap} when
$B = 0$ (size) and $B = \widehat{\sigma}(\tilde{U}_i)$, where $\widehat{\sigma}(\tilde{U}_i)$ is the sample standard deviation of $\tilde{U}_i$.  The level of the test is 0.1 and there are 100 replications to compute the rejection probability.
The power is relatively flat up to $\lambda = 0.4$, increase a bit at $\lambda = 0.3$ and is maximized at
$\lambda = 0.2$.
The bottom panel visualizes each of 21 coefficients as $\lambda$ decreases.
It can be seen that the proportion of working in $t-1$  (\texttt{worked prev} in the legend of the figure)
has the largest coefficient (in absolute value) for all values of $\lambda$ and an indicator of stroke has the second largest coefficient, followed by age at application to Social Security Disability Insurance (SSDI).
The coefficients for selected covariates are given in the last column of Table \ref{Tab:sumstat} for $\lambda = 0.2$.

\begin{table}[htbp]
\caption{\label{Tab:bs-test}Bootstrap Inference}
\centering
\medskip
\begin{tabular}{cccc} \hline \hline
 $\lambda$   & Test statistic & No. of selected cov.s  & Bootstrap p-value \\    \hline
0.2 & 3.525 & 12 & 0.021     \\
0.3 & 3.213 & 11 & 0.020  \\

\hline \end{tabular}
\end{table}

Since the power in the top-right panel of Figure \ref{fig-test-results} is higher at $\lambda = 0.2$ and $0.3$, we report bootstrap test results for $\lambda \in \{0.2, 0.3\}$ in Table \ref{Tab:bs-test}.
There are $R_{\mathrm{PSO}}=1000$ bootstrap replications to obtain the bootstrap p-values.
Interestingly, we are able to reject  the RUR hypothesis at the 5 percent level, unlike BBCCR.
Furthermore, our analysis suggests that
the employment history, captured by the proportion of working previously,
stroke,
and the age at application to SSDI are the three most indicative covariates that point to the departure from the RUR hypothesis.

\end{document}